\newtheorem{theorem}{Theorem}
\newtheorem{corollary}{Corollary}
\newtheorem{lemma}{Lemma}
\newtheorem{remark}{Remark}
\newtheorem{claim}{Claim}
\newtheorem{definition}{Definition}
\def\pbb {\mathbb{P}}
 \DeclareMathOperator*{\dist}{dist}
\DeclareMathOperator*{\argmax}{argmax}
\DeclareMathOperator*{\ints}{int}
\DeclareMathOperator*{\len}{length}
\begin{document}

\date{}

\title{The Capacity of Associated Subsequence Retrieval
}
\author{Behrooz~Tahmasebi, Mohammad~Ali~Maddah-Ali, and Seyed~Abolfazl~Motahari 
\thanks{
Behrooz Tahmasebi is with the Department of Electrical Engineering and Computer Science (EECS), Massachusetts Institute of Technology (MIT), Cambridge, MA, USA (e-mail: bzt@mit.edu).
Mohammad Ali Maddah-Ali is with the Department of Electrical Engineering, Sharif University of Technology, Tehran 11365, Iran (e-mail: maddah\_ali@sharif.edu). Seyed~Abolfazl~Motahari is with the Department of Computer Engineering, Sharif University of Technology, Tehran 11365, Iran  (e-mail: motahari@sharif.edu).}
\thanks{
 This paper has been presented  at  IEEE ISIT 2018 \cite{early}.
}

 }
 \renewcommand\footnotemark{}
\maketitle

\begin{abstract} 
The objective of a genome-wide association study (GWAS) is to associate  subsequences of individuals' genomes to the  observable characteristics called phenotypes (e.g., high blood pressure). 
Motivated by the GWAS problem, in this paper we introduce the information-theoretic  problem of \emph{associated subsequence retrieval}, where a dataset of $N$ (possibly high-dimensional) sequences of length $G$, and their corresponding    observable (binary)  characteristics is given. The sequences are chosen  independently  and uniformly  at random from $\mathcal{X}^G$,  where $\mathcal{X}$ is a finite  alphabet. 
The  observable  (binary) characteristic  is only related  to  a specific unknown subsequence of length $L$ of the sequences, called   \textit{associated subsequence}. 
For each sequence, if the associated subsequence of it belongs to a universal finite set, then it is more likely to display  the observable  characteristic (i.e., it is more likely that the   observable characteristic is  one).  The goal is to retrieve the associated subsequence using a dataset of $N$ sequences and their  observable  characteristics. We demonstrate that as the parameters $N$, $G$, and $L$ grow, a threshold effect appears in the curve of probability of error versus the rate which is defined as ${Gh(L/G)}/{N}$, where  $h(\cdot)$ is the binary entropy function. This effect allows us  to define the capacity of associated subsequence retrieval.  We develop an achievable scheme and a matching converse for this problem, and thus characterize its capacity  in two scenarios: the zero-error-rate and the $\epsilon$-error-rate. 

%
%This paper addresses information-theoretic limits of such studies in an abstract formulation where  individuals' genomes are represented by i.i.d. random sequences of length $G$ from a set of finite alphabet $\mathcal{X}$. 
%There is a specific phenotype which is related to a subsequence with length $L$ of the genome  called  \textit{associated subsequence}. 
%
%If the associated subsequence of that individual belongs to some universal finite set, then it is more likely that  the individual displays that phenotype. The goal is to detect the associated subsequence using a dataset of $N$ individuals and their phenotypes. We demonstrate that as the parameters $N$, $G$, and $L$ grow, a threshold effect appears in the curve of probability of error versus the rate which is defined as ${Gh(L/G)}/{N}$, where  $h(\cdot)$ is the binary entropy function. This effect allows us  to define the capacity of associated subsequence retrieval.  We develop an achievable scheme and a matching converse for this problem, and thus characterize its capacity  in two scenarios: the zero-error-rate and the $\epsilon$-error-rate. 
\end{abstract}

%\begin{abstract} 
%In the problems of genome-wide association study (GWAS), the objective is to associate  subsequences of individuals' genomes to the  observable characteristics called phenotypes, (e.g., high blood pressure). Motivated by this problem, we introduce an abstract information-theoretic problem of GWAS, in which, each   individual is represented by a sequence of i.i.d. random variables, with length $G$, from a set of finite alphabet $\mathcal{X}$, representing his/her genome sequence. There is a specific observable characteristic, which is related to a subsequence with length $L$ of the genome sequence, called  \textit{associated subsequence}. If the associated subsequence of that individual belongs to some universal finite set, then it is more likely that  the individual displays that phenotype. Our objective in this paper is to detect the associated subsequence for a phenotype, using a dataset of $N$ individuals and their observed characteristics. We demonstrate that as the parameters $N$, $G$, and $L$ grow, a threshold effect is observed in the curve of probability of error versus the rate, defined as ${Gh(L/G)}/{N}$, where  $h(.)$ is the binary entropy function. This effect allows us  to define the  capacity of associated subsequence retrieval as the observed threshold.  We develop an achievable scheme and a matching converse for this problem, and thus characterize its capacity  in two scenarios: the zero-error-rate and the $\epsilon$-error-rate. 
%\end{abstract}

\begin{IEEEkeywords}
Genome-wide association study (GWAS), Shannon theory, threshold effect.
\end{IEEEkeywords}

%\textbf{Keywords:} 
  
  %\newpage
  %\tableofcontents

 %\newpage
 
\section{Introduction}
{ In a genome-wide association study (GWAS), the ultimate goal is to find common variants within a population which are associated with a complex disease or a given phenotype. }This task can be fulfilled by sampling individuals from the population and characterizing their variants and disease status at the same time. Due to the high-dimensionality of genomes and complexity of the association, large number of samples are required to retrieve the associated variants reliably. Fortunately, advances in DNA sequencing and microarray technologies have dramatically decreased the cost of information gathering and made such studies a routine procedure in many centers across the globe. 

{
The first step in a GWAS experiment is to sequence or genotype a set of samples from the genomes population. }There exist many tools to reconstruct the genome or identify variants based on raw data, c.f., \cite{alg1,alg2,alg3,alg4,alg5,alg6}. The next step is to infer biological connections between genomic loci and observable characteristics or  phenotypes.  The objective is to find a subsequence of length $L$, called  \emph{associated subsequence},  of a genome of length $G$, which correlates with the observed phenotype. 
% which is known to be challenging. 
%, the main challenge arises where connections between  genomic loci and observable characteristics or  phenotypes needed to be inferred. 
%However, before running an experiment related to a phenotype and sampling from a population, 
In this direction, a  fundamental question is how many individuals are required  to be sampled to retrieve the associated subsequence reliably. 
%This paper aims at providing an answer to this question from an information-theoretic point of view. 

GWAS has been studied  extensively  and novel  biological results  have been discovered (see e.g.,   \cite{gwas1, gwas4,gwas5,gwas6,gwas9,gwas8,gwas10,gwas11,gwas12,gwas13, gwas14}). 
An important application of GWAS is when the observed phenotype is related to a disease.
There are a number of works studying the associated subsequences of  diabetes (type I  \cite{gwas4}  and type II \cite{gwas5}) and  various types of cancers \cite{gwas6},  e.g.,  the breast cancer \cite{gwas9} and the prostate cancer \cite{gwas8}.

%Studying problems dealing with computational biology  from the information-theoretic viewpoint 
Recently, a number of  computational biology problems have been studied from an information-theoretic viewpoint. % attracted the attention of a large number of  researchers  in information theory society in recent years. 
For example, in the problem of  reconstructing a genome from the sequencing reads, information-theoretic limits are  characterized in \cite{motahari1}, followed by \cite{guy, motahari2, motahari3, tse1, ilan1,ilan2,ilan3}. Also, a number of authors considered the DNA storage systems, and studied the capacity and coding designs for them in several scenarios  \cite{newit2, newit3, newit4, newit5, newit6, ilan4, ilan5}. Haplotype assembly is another example which falls into this category \cite{hap, newit1}. 

 Motivated by the GWAS problem,  in this paper, we introduce an abstract information-theoretic problem of \emph{associated subsequence retrieval}; see Fig. \ref{fig:model}.  
In this problem, a dataset of $N$ sequences of length $G$
 and their corresponding  observable (binary) characteristics is given\footnote{In this paper, we assume that the dataset is drawn from one population and it is homogeneous.}.
  The given sequences  are denoted by $\bold{x}_n$, $n=1,2,\ldots,N$, and  are chosen independently and uniformly at random from $\mathcal{X}^G$, for a finite alphabet $\mathcal{X}$. 
There is a subsequence $\bold{s}=(s_1,s_2, \ldots ,s_L)$ of  $(1,2,\ldots,G)$, unknown a priori, which is associated with a  binary characteristic. In the model, if $\bold{x}_\bold{s}:=(x_{s_1}, x_{s_2}, \ldots ,x_{s_L})$ equals to one of the $m\in \mathbb{N}$ possible a priori unknown sequences, then the observable characteristic is more likely to be one. The set of $m$ unknown sequences is denoted by $\mathcal{Q}$. Therefore,  if $\bold{x}_\bold{s}\notin \mathcal{Q}$ then the corresponding observable characteristic is more likely to be zero.  In the considered information-theoretic problem, we would like to study the asymptotic behavior of the minimum sample complexity  $N$ such that $\bold{s}$  can be reliably identified from the given $N$ sequences and their corresponding observable characteristics. 

%Hence, in our problem, we do not know the subsequence and the corresponding 

%and a set $\mathcal{Q}$ of $m\in \mathbb{N}$ sequences of length $L$, such that if $\bold{x}_{\bold{s}} \in \mathcal{Q}$, then with probability $1-\alpha$, $\alpha \in [0,1/2)$, the observed phenotype is $y=1$, and with probability $\alpha$, the label is zero. Otherwise if $\bold{x}_{\bold{s}}\notin \mathcal{Q}$, then with probability $1-\alpha$ the observed phenotype is $y=0$, and with probability $\alpha$, it is one. 
%We also assume that we are given a dataset of the genomes of $N$ individuals, denoted by $\bold{x}_1, \bold{x}_2, \ldots, \bold{x}_N$, and their corresponded observed characteristics $y_1,y_2,\ldots,y_N$.  The objective is to find $\min N$, such that we can retrieve the subsequence $\bold{s}$ from the given data. 

\begin{figure*}\begin{center}
\includegraphics[scale = 1]{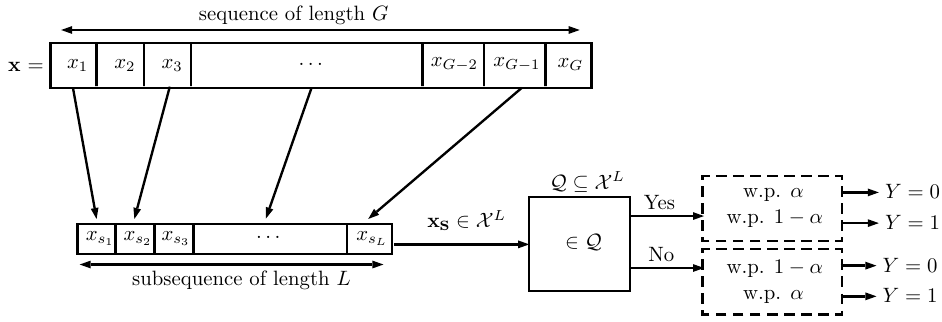}
\caption{The problem of associated subsequence retrieval. For each sequence, a specific unknown subsequence of it is chosen, and if the result belongs to a specific universal set $\mathcal{Q}$, then it is more likely to display the characteristic. The main objective  is to retrieve the unknown subsequence $\bold{s} = (s_1,s_2,\ldots, s_L)$  from the given $N$ i.i.d. samples,  without relying on any prior information about $\mathcal{Q}$.  } \label{fig:model}

\end{center}
\end{figure*}

The main contribution of this paper is to show that there is  a threshold effect in the error probability of the associated subsequence retrieval problem. In particular, we define the rate of problem  
as $\frac{G h(L/G)}{N}$, where $h(\cdot)$ is the binary entropy function. We prove that in the asymptotic regimes, if the rate is less than a  threshold, then there exist retrieval schemes with arbitrary low probabilities of error. Conversely, if the rate is above that  threshold, then there is no scheme having vanishing  probability of error. The threshold is then called the capacity of  associated subsequence retrieval, and it is  explicitly characterized it in this paper.

%The main contribution of this paper is to characterize  the fundamental limits of the associated subsequence retrieval.
 %We define the rate of the problem as $\frac{G h(L/G)}{N}$, where $h(\cdot)$ is the binary entropy function. 
 %This definition, as we will see later, is the natural scaling law of the parameters of the problem that 
 %allows us to define the capacity.
%Note that we are interested in maximizing the rate, because we are interested in finding the minimum sample complexity $N$ required for the low error-rate retrieval. 
%As a convention in many information-theoretic problems, here a  rate is said to be achievable if it is possible to retrieve the associated subsequence reliably enough, as $N,G$ and $L \rightarrow \infty$. The capacity  is also defined as the supremum of  all achievable rates. 

In particular, we define two notions of  the zero-error-rate and the $\epsilon$-error-rate   associated subsequence retrieval. The  error-rate  is defined  as  the fraction of incorrectly associated indices in the estimation of the associate subsequence. 
 For the $\epsilon$-error-rate estimation, we are interested in  retrieving the  associated subsequence with an error-rate of at most $\epsilon$, while for the zero-error-rate estimation, no positive error-rate is acceptable. 
In the zero-error-rate case, we fully characterize the capacity. The capacity is proven to be a finite positive number, which shows that the  scaling of parameters in the definition  of    rate is reasonable. 
  In the $\epsilon$-error-rate regime, we show that for small enough $\epsilon$, 
 the capacity is 
  the same  as the
    zero-error-rate case. 
 This shows that the two problems of the zero-error-rate and  the $\epsilon$-error-rate associated subsequence retrieval are equivalent in the asymptotic regimes. 
 
 The rest of the paper is organized  as follows.  Section \ref{Problem Formulation} is devoted to the mathematical model and the  definition of the capacity. In Section \ref{Main Results},   the main results of the paper are presented. The proofs are available in Section  
 \ref{Proof1} and Section \ref{Proof2}, and finally, Section \ref{Conclusion} concludes the paper.

\section{ Problem Statement}\label{Problem Formulation}

\subsection{Notation}
In this paper,   random variables are denoted by capital letters, such as $X$, and their realizations are denoted by lower case letters, such as $x$. 
But as an exception, we use the capital letters $N$, $G$ and $L$ to denote the problem's parameters, which are non-random.
For a (discrete) random variable $X$, $p_X$ denotes its  probability mass function. 
Random probability mass functions are also denoted by capital letters, like $P_X$. 
The  sequences are denoted  by bold letters, like $\bold{x}$,  and the random sequences are denoted by capital bold letters, like $\bold{X}$. 
For any positive integer $G$, let $[G]:=\{1,2,\ldots,G\}$. 
The $\ell_p$ norm of a vector $\bold{w} \in \mathbb{R}^n$ for  $p \ge 1$ is defined as
\begin{align}
\| \bold{w} \|_{p} := \Big(  \sum_{i=1}^n |w_i|^p \Big )^{1/p}.
\end{align}
For a sequence $\bold{x}=(x_1,x_2,\ldots, x_G )$  of  length $G$,  and a sequence $  \bold{s} = (s_1, s_2, \ldots ,s_L) \in [G]^L$  of length $L$,   we define $\bold{x}_{\bold{s}}:= (x_{s_1},  x_{s_2}, \ldots  x_{s_L})$. $\bold{x}_{\bold{s}}$  is clearly a  subsequence of $\bold{x}$. 
Also, we denote the length of a sequence $\bold{x}$ by $\len(\bold{x})$. 
The base two logarithm is denoted by $\log(.)$.
For any $p \in [0,1]$, the binary entropy function $h:[0,1] \to [0,1]$ is defined as
\begin{align}
h(p) :=  p  \log(\frac{1}{p})+(1-p)  \log(\frac{1}{1-p}).
\end{align}
The mutual information of two discrete random variables $X$ and $Y$ is denoted by $I(X;Y).$ 
 The distance of two sequences $\bold{s}=(s_1 ,s_2 ,\ldots ,s_L)$ and  $\bold{t}=(t_1, t_2 ,\ldots ,t_L)$ is defined as 
 \begin{align}
\dist(\bold{s}, \bold{t}):= \Big | \{s_1,s_2, \ldots, s_L\}\triangle\{t_1,t_2, \ldots, t_L\}\Big |,
\end{align}
where  $\triangle$ denotes the symmetric difference of sets.
   The set of all strictly increasing subsequences of length $L$  of  $(1,2,\ldots,G)$ is denoted by $\mathcal{S}_{L,G}$. 
   Given a finite set $\mathcal{X}$ and an integer $m$, we define  
   \begin{align}
   \mathcal{F}_{L,m} := \Big \{ f :  \mathcal{X}^L \rightarrow \{0,1\} : ~     \Big | f^{-1}(1)\Big |  = m    ~  \Big \}, \label{fclass}
\end{align}
where $f^{-1}(1)$ is the set of sequences  $\bold{x}\in \mathcal{X}^L$, such that $f(\bold{x}) = 1$. In other words, $\mathcal{F}_{L,m}$ is the set of functions that exactly map  $m$ sequences of $\mathcal{X}^L$ to one, and map the others  to zero. 
   
\subsection{System Model}

Consider a dataset of $N$ i.i.d. samples $\bold{x}_1,\bold{x}_2,\ldots, \bold{x}_N$, where each sample $\bold{x}_n$ is chosen uniformly at random from the set of  sequences of length $G$ from a finite alphabet  $\mathcal{X}$, i.e., $\bold{x}_n \in \mathcal{X}^G$. The observed characteristic of each sample $\bold{x}_n$, denoted by $y_n$, can only take two states\footnote{
For simplicity, in this paper, we only consider binary characteristics. However, the results and proofs of this paper are valid for any  characteristic with a finite label set. 
} denoted by 0 and 1.

%We are interested in associating a specific observed characteristic with a data set of $N$ samples. Each sample can be  represented by a sequence of length $G$ with elements from a finite set $\mathcal{X}$. Moreover, the observed characteristic of each sample can take only two states\footnote{
%For simplicity, in this paper, we only consider binary characteristics. However, the results and proofs of this paper are valid for any  characteristic with a finite label set. }, which can be represented by $\{0,1\}$. 

There is a stochastic map $\mathscr{F}: y_n = \mathscr{F}(\bold{x}_n)$,  which associates the $n^{\text{th}}$ sample  $\bold{x}_n \in \mathcal{X}^G$ to its observed characteristic $y_n\in \{0,1\}$. 
As depicted in Fig. \ref{fig:model}, the map $\mathscr{F}$ is formed as follows: first a specific unknown subsequence  of length $L$ of $\bold{x}$,  denoted by $\bold{x}_{\bold{s}}=(x_{s_1},  x_{s_2}, \ldots, x_{s_L})$, is chosen. 
Then, $\bold{x}_{\bold{s}}$ goes through an indication  function that outputs 1 if $\bold{x}_{\bold{s}}\in \mathcal{Q}$, and zero otherwise, for a universal set $\mathcal{Q}$, with size $|\mathcal{Q}|=m$, $m\in \mathbb{N}$. As shown in Fig. 2, we denote this indication function by $f(.)$. Note that the parameter $m$ denotes the number of patterns in the associated subsequence which increase the probability of displaying the characteristic (i.e., the probability of $Y_n=1$). Then, a Bernoulli random variable is  XOR'd with    $f(\bold{x}_{\bold{s}})$    
%Figure 2  represents the set of plausible functions that we are interested in. In particular,  to generate the characteristic $y_n$,  first  a specific unknown subsequence of length $L$ of $\bold{x}_n$,  represented by $\bold{x}_{n,\bold{s}}=x_{n,s_1} x_{n,s_2}\ldots x_{n,s_L}$,  is chosen.
%A deterministic binary-valued function $f : \mathcal{X}^L \to \{0,1\}$ is utilized to determine whether $\bold{x}_{n,\bold{s}}$ belongs to a universal finite set or not.  Then, a Bernoulli random variable meets $f(\bold{x}_{n,\bold{s}}) \in \{0,1\}$, 
and the result is the observed characteristic $y_n$. 
More precisely, $y_n=f(\bold{x}_{n,\bold{s}}) \oplus Z_n$, where $Z_n$ is a Bernoulli random variable with parameter $\alpha$. 
We assume that the additive noises $\{Z_n\}_{n\in \mathbb{N}}$ are    independent from the sequences and are  independently chosen for different samples. 
The existence of  the additive noise in the problem setup represents the effect of   the other factors in the observed characteristic, such as the environmental effects, which are not related to the genome sequences in the GWAS problem.  In one extreme, the observed characteristics are  highly correlated with the sequences ($\alpha \approx 0$). In the other extreme,  the labels are approximately independent from the sequences ($\alpha \approx 1/2$). We assume in this paper that $\alpha \in [0,1/2)$ is   given.

In the model, the sequence $\bold{s}$ and the deterministic function $f(.)$ are unknown but they are  the same for all  $N$ sequences. Throughout  this paper, we call $\bold{s}$ the associated subsequence, and the main objective is to retrieve  $\bold{s}$. 
It is  assumed that the parameter $L$ (the length of the associated subsequence) is given.  In addition, in this paper, we focus on cases that $L \ll G$. More precisely, we assume that  $L/G$ goes to zero, whenever we consider the asymptotic regimes in the paper. This assumption is motivated by what we observed in the GWAS problem.    
We also assume that the deterministic function $f(.)$ is chosen uniformly at random from $\mathcal{F}_{L,m}$, defined in (\ref{fclass}),  for a given positive integer $m$.   
%From the statistical learning view, the parameter $m$ is related to the Vapnik-Chervonenkis dimension (VC dimension)of the class of  functions $f(.)$ in the model. We will explain and use this relation in the proofs of the paper. 
  The associates subsequence $\bold{s}$ is selected randomly and uniformly  from the set of all strictly increasing sequences of length $L$ with the entries belonging to $[G]$, which is  denoted by $\mathcal{S}_{L,G}$
  \footnote{ 
  We note that because there is no information about $\bold{s}$ and $f(.)$ in the model, we assume  that the prior distribution of them is uniform.  
  It is also worth mentioning that the uniform sampling of $\bold{s}$ is only needed for the converse proofs. The achievability  proofs hold for any prior distribution on the set $\mathcal{S}_{L,G}$. Also, the uniform sampling of $f(.)$ is only required for the achievability proof and the converse proof holds for any prior on $\mathcal{F}_{L,m}$.
  }. 
The entries of the sequence $\bold{s}$ represent the sites in each sampled sequence that affect the observed characteristic.
% The dataset of $N$ sequences  is also sampled uniformly and independently from the  set $\mathcal{X}^G$ in the paper.  

\begin{figure*}
\begin{center}
\includegraphics{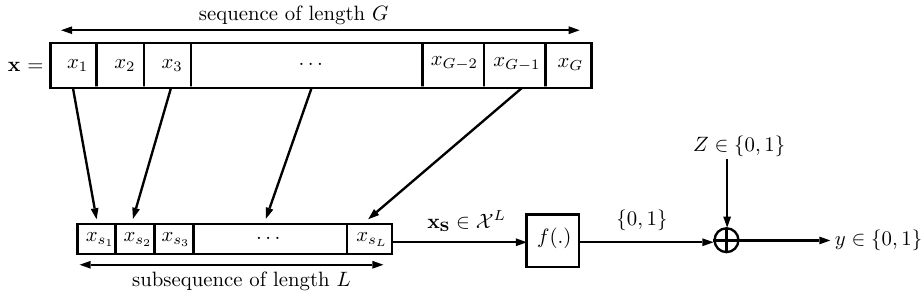}
\caption{The system model. For any  sequence $\bold{x}\in \mathcal{X}^G$, a subsequence  of it with length $L$  is selected. 
%This subsequence passes through a deterministic binary-valued function $f(.)$.  Then,  $f(\bold{x}_{\bold{s}})$ meets an additive noise and  makes the  label $y= f(\bold{x}_{\bold{s}}) \oplus Z$. 
The function $ f(.)$ is one for $m$ sequences and zero for the other $|\mathcal{X}^L| -m$ sequences.   Finally, $y= f(\bold{x}_{\bold{s}}) \oplus Z$, where $Z\sim $ {Bern}$(\alpha)$, and $\alpha \in [0,0.5)$. 
} 
\end{center}
 \label{fig:model2}
\end{figure*}

Let us define the parameter $\beta:= \pbb (Y_n=1)$.  Note that  $ \beta \in (\alpha,1-\alpha)$ in the proposed data generation model. It is assumed that $\beta$ is known\footnote{
Note that  $\beta$ is a constant that only depends on $m,L,\alpha$ and $|\mathcal{X}|$.
}. 
In this paper, the objective is to   estimate $\bold{s}$, given $N$ sampled sequences $\{\bold{x}_n\}_{n\in [N]}$ and their corresponding observed characteristics $\{y_n\}_{n \in [N]}$. 
In the following, we formally define the algorithms for this purpose.

\begin{definition}
Algorithm $\mathcal{A}_{(G,L,N,{\alpha},\beta,m)}$  is a mapping from the set of all possible input datasets, $(\mathcal{X}^G)^N\times \{0,1\}^N$,  to the set  $\mathcal{S}_{L,G}$. When  there is no ambiguity, we denote an algorithm by $\mathcal{A}_{G}$ or  $\mathcal{A}$. For a dataset $(\{\bold{x}_n\}_{n \in [N]},\{y_n\}_{n \in [N]})$, $\hat{\bold{s}}=\mathcal{A}(\{\bold{x}_n\}_{n \in [N]},\{y_n\}_{n \in [N]})$ denotes  the output of the algorithm. 
\end{definition}

Next we formally define the  error event and also the probability of error for an algorithm.

\begin{definition}
For a positive  $\epsilon $ and  an algorithm $\mathcal{A}_{(G,L,N,{\alpha},\beta,m)}$, the error event $\mathcal{E}_{\mathcal{A}, \epsilon}$ is defined as  
 $\mathcal{E}_{A,\epsilon}:=\{\frac{\dist(\bold{\hat{S}} ,\bold{S} )}{L}>\epsilon\}$, where  $\hat{\bold{S} }$ is the  output of the algorithm.  Also, the worst-case probability of error of an  algorithm $\mathcal{A}$ is defined as\footnote{In this paper, the required condition to retrieve  the associated subsequence is  defined based on bounding the probability of having a large  error-rate, with respect to a threshold. However, one can see that if the desired condition is defined based on vanishing/bounding the  expectation of the error rate, that is $\mathbb{E} \{\frac{\dist(\bold{\hat{S}} ,\bold{S} )}{L} \}$, the same results hold on the capacity of the problem.}  
\begin{align}
P^{\text{WC}}_{\epsilon}(\mathcal{A}):=  \max_{\bold{s} \in  \mathcal{S}_{L,G}} \pbb(\mathcal{E}_{\mathcal{A}, \epsilon}| \bold{S}=\bold{s})\label{pbbnote}.
\end{align}
The average probability of error is also defined as $P^{\text{AVG}}_{\epsilon}(\mathcal{A}):= \pbb(\mathcal{E}_{\mathcal{A},\epsilon})$.
\end{definition}

 \begin{remark} \normalfont
 Note that the probability measure $\pbb$ in (\ref{pbbnote}) is defined  with respect to the  random  sequences $\bold{X}_1,\bold{X_2},\ldots,\bold{X}_N$, the random noises $Z_1,Z_2,\ldots,Z_N$,  the random subsequence $\bold{S} \in \mathcal{S}_{L,G}$, and the random function $F(.) \in \mathcal{F}_{L,m}$. 
  \end{remark}

 \begin{remark}   \normalfont
The parameter $\epsilon$ is a threshold for the normalized distance between  $\bold{s}$ and its estimation    $\hat{\bold{s}}$. 
 Note that for any algorithm $\mathcal{A}$,  $P^{\text{AVG}}_{\epsilon}(\mathcal{A}) \le P^{\text{WC}}_{\epsilon}(\mathcal{A})$.
Also, by the definition of the error event,   if $\epsilon_1 \ge \epsilon_2$ then $\mathcal{E}_{\mathcal{A},\epsilon_1} \subseteq \mathcal{E}_{\mathcal{A}, \epsilon_2}$ and thus $P^{\text{WC}}_{\epsilon_1}(\mathcal{A})\le P^{\text{WC}}_{\epsilon_2}(\mathcal{A})$ and $P^{\text{AVG}}_{\epsilon_1}(\mathcal{A})\le P^{\text{AVG}}_{\epsilon_2}(\mathcal{A}).$
 \end{remark}   

In this paper,  the goal is to characterize the fundamental limits of the associated subsequence retrieval problem, i.e.,  the region for the parameters of the  problem such that the retrieval of $\bold{s}$ is possible. 
For this purpose, we derive the fundamental limits in two scenarios. First, we study the problem in the  zero-error-rate regime, meaning that no positive error-rate is allowed. Second, we study the problem of approximating   $\bold{s}$ such that a  positive  error-rate of at most $\epsilon$ is acceptable. 
In the following definitions, first we  define the achievable  algorithms and rates and then we  define  the capacity  of problem. 

\begin{definition}
For any positive $\epsilon$,  a sequence of algorithms $\{\mathcal{A}_{(G_i,L_i,N_i,{\alpha}, \beta, m_i)}\}_{i \in \mathbb{N}}$\footnote{
Assume that $G_i,N_i$ and $L_i$ are strictly increasing functions of $i$.
}, where\footnote{
By  $m=o(N)$, we mean ${m_i}/{N_i} \to 0$ as $i \to \infty$. More precisely, this means  that the size of the given dataset $N$  is much greater than the number of   patterns which increase the probability of  displaying the characteristic.
} $m=o(N)$, is  said to be  $\epsilon-$achievable, if and only if     $P^{\text{WC}}_{\epsilon}(\mathcal{A}_{G_i}) \rightarrow 0$ as $i \rightarrow \infty.$ 
\end{definition}

\begin{definition}
A positive real $R$ is said to be an $\epsilon-$achievable rate, if and only if  there is an $\epsilon-$achievable    sequence of algorithms $\{\mathcal{A}_{(G_i,L_i,N_i,{\alpha}, \beta, m_i)}\}_{i \in \mathbb{N}}$, such that 
%The rate of an algorithm $\mathcal{A}_{(G,L,N,{\alpha},\beta,m)}$ is defined as $R_{\mathcal{A}} := \frac{G h(L/G)}{N}$, where
$R \le \frac{G_i h(L_i/G_i)}{N_i}$ for each $i$, where
 $h(.)$ is the binary entropy function. 
\end{definition}

\begin{definition}
A positive real number $R$ is  said to be achievable, if and only if for any positive $\epsilon$,   $R$ is $\epsilon-$achievable.
\end{definition}

Now we are ready to define the capacity  of the problem.

\begin{definition}
The zero-error-rate capacity  is defined as the supremum of all achievable rates  and is denoted by $C(\alpha,\beta).$ 
Also, for any positive $\epsilon$,  the  $\epsilon-$capacity is defined as the supremum of all $\epsilon-$achievable rates and is denoted by $C_{\epsilon}(\alpha,\beta)$.
\end{definition}

%\begin{remark}   
% Note that  the capacity is the reciprocal of  the minimum number of required sampled data, normalized by $G h(L/G)$,  such that the retrieval   of $\bold{s}$ is asymptotically possible.  We notice that in the definition of the zero-error-rate capacity,  any positive error-rate  must be removed asymptotically, while in the $\epsilon-$capacity, an error-rate of at most $\epsilon$ is acceptable.
%Also, the assumption $m=o(N)$ means that the size of the given dataset $N$  is very greater than the number of   patterns which increase the probability of the displaying the characteristic. Otherwise, the retrieval  of the associated subsequence is impossible, when the size of the dataset is at most in the same order with the parameter $m$. 
% \end{remark}

\begin{remark}   \normalfont
Due to the definitions of the capacities,  
\begin{align}
 C({\alpha},\beta) = \inf_{\epsilon>0} C_{\epsilon}({\alpha},\beta) \le C_{\epsilon_1}({\alpha},\beta) \le C_{\epsilon_2}({\alpha},\beta),
\end{align}
for any  positive real numbers  $\epsilon_1,\epsilon_2$,  such that  $\epsilon_1\le \epsilon_2$. 
\end{remark}

 \section{Main Results}\label{Main Results}
 
In this section, we state the main results of the paper. In the following theorem, we characterize the  capacity of the associated subsequence retrieval  $C(\alpha,\beta).$

\begin{theorem}\label{thrm1}
The zero-error-rate capacity of the associated subsequence retrieval is 
\begin{align}
C({\alpha},\beta)={h(\beta)-h({\alpha})}, 
\end{align}
where $h(.)$ is the binary entropy function. 
\end{theorem}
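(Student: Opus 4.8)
The plan is to establish the two matching inequalities $C(\alpha,\beta)\le h(\beta)-h(\alpha)$ and $C(\alpha,\beta)\ge h(\beta)-h(\alpha)$.

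\textbf{Converse.} I would first isolate the information about $\bold{S}$ contained in the dataset. Since the genomes $\{\bold{X}_n\}_{n\in[N]}$ are drawn independently of $\bold{S}$, all the dependence flows through the labels, so $I(\bold{S};\{\bold{X}_n\}_{n},\{Y_n\}_{n})=I(\bold{S};\{Y_n\}_{n}\mid\{\bold{X}_n\}_{n})$. Writing this as $H(\{Y_n\}\mid\{\bold{X}_n\})-H(\{Y_n\}\mid\bold{S},\{\bold{X}_n\})$, I would bound the first term by $\sum_{n}H(Y_n)\le Nh(\beta)$, since each label is marginally Bernoulli($\beta$), and lower bound the second term by conditioning further on the latent function $f$: as conditioning cannot increase entropy, $H(\{Y_n\}\mid\bold{S},\{\bold{X}_n\})\ge H(\{Y_n\}\mid\bold{S},\{\bold{X}_n\},f)=\sum_n H(Z_n)=Nh(\alpha)$, because once $\bold{S}$, the genomes and $f$ are fixed the only remaining randomness in $Y_n$ is the noise $Z_n$. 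This yields the single-letter bound $I(\bold{S};\mathrm{data})\le N\bigl(h(\beta)-h(\alpha)\bigr)$. To convert this into a converse for approximate recovery I would invoke a distortion-aware form of Fano's inequality: if $P^{\mathrm{WC}}_\epsilon(\mathcal{A})\to 0$ then $H(\bold{S})-\log V_\epsilon-1\le I(\bold{S};\mathrm{data})$, where $V_\epsilon:=\max_{\bold{s}}\bigl|\{\bold{t}\in\mathcal{S}_{L,G}:\dist(\bold{s},\bold{t})\le\epsilon L\}\bigr|$. Since $H(\bold{S})=\log\binom{G}{L}=Gh(L/G)(1+o(1))$ while a direct count gives $\log V_\epsilon=O(\epsilon)\cdot Gh(L/G)$, taking $\epsilon$ small forces $R=\tfrac{Gh(L/G)}{N}\le h(\beta)-h(\alpha)$; as achievability demands vanishing error for every $\epsilon>0$, this gives $C(\alpha,\beta)\le h(\beta)-h(\alpha)$.

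\textbf{Achievability.} Fixing $R<h(\beta)-h(\alpha)$ and $\epsilon>0$, and setting $N=\lceil Gh(L/G)/R\rceil$, I would analyze the decoder that returns the $\bold{s}'$ for which the data $\{(\bold{x}_{n,\bold{s}'},y_n)\}_n$ is jointly typical with some $f'\in\mathcal{F}_{L,m}$ under the planted model $Y=f'(\bold{X}_{\bold{s}'})\oplus\mathrm{Bern}(\alpha)$ (equivalently, minimizing the empirical risk $\min_{f'\in\mathcal{F}_{L,m}}\tfrac1N\sum_n\mathbbm{1}[f'(\bold{x}_{n,\bold{s}'})\ne y_n]$). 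The true $\bold{s}$ passes the test with its true $f$ by the law of large numbers. For a competitor with $\dist(\bold{s},\bold{s}')>\epsilon L$, the key point is that the conditional law $\pbb(Y=1\mid\bold{X}_{\bold{s}'})$ stays within $o(1)$ of $\beta$ (argued below), so a method-of-types estimate bounds the probability that $\{(\bold{X}_{n,\bold{s}'},Y_n)\}_n$ is atypically close to a planted law by $2^{-N(D-o(1))}$, where $D$ is the average $\mathrm{KL}$ divergence to $\mathrm{Bern}(\beta)$ induced by an $f'$ with a $q:=(\beta-\alpha)/(1-2\alpha)$ fraction of ones. A short computation collapses this to $D=h(\beta)-h(\alpha)$. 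Since $m=o(N)$ forces $|\mathcal{X}|^L=m/q=o(N)$, the union bound over the $|\mathcal{F}_{L,m}|\le 2^{|\mathcal{X}|^L}$ choices of $f'$ costs only $o(N)$ in the exponent, and the union bound over the $\binom{G}{L}\le 2^{Gh(L/G)}=2^{NR}$ candidates $\bold{s}'$ then closes exactly when $R<h(\beta)-h(\alpha)$.

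\textbf{Main obstacle.} I expect the crux to be the claim that even the \emph{hardest} competitors keep the exponent at $h(\beta)-h(\alpha)$. The worst case is an $\bold{s}'$ sharing all but $\epsilon L/2$ of its coordinates with $\bold{s}$, for which $\bold{X}_{\bold{s}'}$ is strongly correlated with $\bold{X}_{\bold{s}}$; one must show its best attainable risk is still bounded away from $\alpha$. The resolution, which has to be made uniform over the overlap size, is that the $\epsilon L/2$ coordinates in which $\bold{s}$ and $\bold{s}'$ differ span $|\mathcal{X}|^{\epsilon L/2}\to\infty$ configurations, over which the uniformly random latent $f$ self-averages, so that $\pbb(f(\bold{X}_{\bold{s}})=1\mid \bold{X}_{\bold{s}'})=q+o(1)$ and hence $\pbb(Y=1\mid\bold{X}_{\bold{s}'})=\beta+o(1)$ with only negligible fluctuations across values. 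This exponential randomization is precisely why a vanishing-fraction perturbation of $\bold{s}$ remains statistically well separated, and it is also the reason the $\epsilon$-error and zero-error capacities will coincide for small $\epsilon$. Making these $o(1)$ terms and the accompanying type-counting uniform enough to survive the $2^{NR}$-fold union bound, while simultaneously absorbing the nuisance search over $\mathcal{F}_{L,m}$ through $m=o(N)$, is the delicate part of the proof.
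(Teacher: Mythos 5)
Your proposal is correct in outline and follows essentially the paper's route, with one genuinely different sub-step. The converse is the same two-stage argument: the bound $I(\bold{S};\hat{\bold{S}})\le I(\bold{S};Y^{N}\,|\,\bold{X}^{N})\le N\bigl(h(\beta)-h(\alpha)\bigr)$ obtained by conditioning on the latent $f$, combined with a distortion-corrected Fano inequality whose slack is the log-volume of the $\epsilon L$-ball, followed by $\epsilon\to 0$; and the achievability is the same joint-typicality search over pairs $(\bold{s}',f')$, with the crux resolved exactly as in the paper's Lemmas \ref{lemma11}--\ref{lemma13}: the uniformly random $f$ self-averages over the at least $|\mathcal{X}|^{\epsilon L/2}$ configurations of the coordinates where a far competitor differs from $\bold{s}$, yielding approximate independence of $\bigl(g(\bold{X}_{\bold{t}}),Y\bigr)$ uniformly over overlap patterns and values (the paper makes the uniformity precise via data processing for $f$-divergences plus Hoeffding, the concentration $\exp\bigl(-\Theta(|\mathcal{X}|^{\epsilon L/2})\bigr)$ beating the $2^{L}|\mathcal{X}|^{L}$-fold union over patterns). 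Where you differ is the union bound over functions: the paper invokes Sauer's lemma (Corollary \ref{corollary2}), counting at most $(eN/m)^{m}=2^{m\log(eN/m)}=2^{o(N)}$ dichotomies, whereas you use the crude count $|\mathcal{F}_{L,m}|\le 2^{|\mathcal{X}|^{L}}$ together with the observation that fixing $\beta$ pins $\gamma=m/|\mathcal{X}|^{L}$ at $q=(\beta-\alpha)/(1-2\alpha)$, so $|\mathcal{X}|^{L}=\Theta(m)=o(N)$. Your count is more elementary and is valid in the paper's regime (the paper itself uses $m=\Theta(|\mathcal{X}|^{L})$ in a footnote of Section \ref{Proof2}), while the VC route is the more robust one: it needs only $m=o(N)$, with no constraint tying $|\mathcal{X}|^{L}$ to $m$, and would survive a variant of the model in which $\gamma\to 0$. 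Two small inaccuracies, neither fatal: your ball count should read $\log V_{\epsilon}\le \log(L\epsilon)+G\,h(2\epsilon L/G)\le \log(L\epsilon)+h(\epsilon)\,G\,h(L/G)$ (via the paper's Lemma \ref{lemma4}), an $h(\epsilon)=\Theta(\epsilon\log(1/\epsilon))$ factor rather than $O(\epsilon)$ --- harmless, since only $h(\epsilon)\to 0$ is needed; and your Fano display silently drops the term $\pbb(\mathcal{E})\log\binom{G}{L}$, which is of order $\pbb(\mathcal{E})\cdot N R$ and is negligible only because the error probability vanishes, so it should be carried along and disposed of explicitly as the paper does.
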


The achievability  proof of the theorem is  in Section \ref{achievability}, and the proof of the converse can be found in Section \ref{converse}. 
\begin{remark} \normalfont
  Theorem \ref{thrm1}  characterizes the  capacity of the  zero-error-rate associated subsequence  retrieval. This shows that there is a threshold effect in the problem at  $Gh(L/G)/N$, in the asymptotic regimes.  
  Note that  the capacity is strictly positive since $ \beta \in (\alpha,1-\alpha)$.
  \end{remark}
  
  \begin{remark} \normalfont
  For the achievability, we examine all  the  subsequences of length $L$ of the given sequences  and choose the one for which the two binary vectors $(f(\bold{x}_{n ,\hat{\bold{s}}}))_{n \in [N]}$ and $(y_n)_{n \in [N]}$ are jointly typical for some $f \in \mathcal{F}_{L,m}$.
  Note that unlike the channel coding, there is no codebook in this setup and   the sequences are  produced by nature. This changes  the proof techniques. Still,  we prove that the probability of error in the proposed scheme goes to zero, using the  approximation methods ignoring the dependency among some events and bounding the effect of this assumption.
  \end{remark}
  
  \begin{remark} \normalfont
  For the converse, we cannot directly use   Fano's inequality due to the definition of the error event.  Instead,  we develop some inequalities similar to it. 
  The need for this new bound is due to the fact that in this case, if there is an  approximation of the associated subsequence with the error-rate of at most $\epsilon$, then we cannot determine it exactly. This fact yields  some new terms which appear in the converse proof that are  required to be investigated.
  \end{remark}

We are also interested in characterizing the minimum number of required samples  to find an {approximation}  of the associated subsequence with respect to a  given positive error-rate $\epsilon$.  In the following theorem, we state the result of this paper on the $\epsilon$-error-rate capacity.

\begin{theorem}\label{thrm2}
 There is a positive $\epsilon_0 \in (0,1/2)$, such that for any $\epsilon \in (0,\epsilon_0)$, 
\begin{align}
C_{\epsilon}({\alpha},\beta)={h(\beta)-h({\alpha})}.
\end{align}
\end{theorem}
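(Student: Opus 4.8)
The plan is to prove the two inequalities $C_{\epsilon}(\alpha,\beta)\ge h(\beta)-h(\alpha)$ and $C_{\epsilon}(\alpha,\beta)\le h(\beta)-h(\alpha)$ for all sufficiently small $\epsilon$. The achievability direction is immediate from what precedes: by definition every achievable rate is $\epsilon$-achievable, so the ordering of the capacities established earlier gives $C_{\epsilon}(\alpha,\beta)\ge C(\alpha,\beta)$ for every $\epsilon>0$, and Theorem~\ref{thrm1} identifies $C(\alpha,\beta)=h(\beta)-h(\alpha)$. No separate achievability scheme is needed, so all the work is in the converse: exhibiting a fixed threshold $\epsilon_0$ such that no rate exceeding $h(\beta)-h(\alpha)$ is $\epsilon$-achievable for any $\epsilon<\epsilon_0$.

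For the converse I would begin from the same information-theoretic backbone as Theorem~\ref{thrm1}. Since the genomes $\{\bold{X}_n\}$ are independent of $\bold{S}$, we have $I(\bold{S};\{\bold{X}_n,Y_n\}_{n})=I(\bold{S};\{Y_n\}_n\mid\{\bold{X}_n\}_n)\le N\bigl(h(\beta)-h(\alpha)\bigr)$, because each label carries at most $h(\beta)$ bits while, conditioned on $\bold{S}$ and the revealed function $f$, it retains $h(\alpha)$ bits of noise entropy. The new ingredient is a distortion-aware substitute for Fano's inequality adapted to the metric $\dist$: if $\hat{\bold{S}}$ lies within distance $\epsilon L$ of $\bold{S}$ with probability $1-P_e$, then conditioned on $\hat{\bold{S}}$ the variable $\bold{S}$ lies in a $\dist$-ball of cardinality at most $V_{\epsilon}:=\sum_{k\le \epsilon L/2}\binom{L}{k}\binom{G-L}{k}$, so $H(\bold{S}\mid\hat{\bold{S}})\le h(P_e)+P_e\log|\mathcal{S}_{L,G}|+(1-P_e)\log V_{\epsilon}$. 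Feeding this into the mutual-information bound yields $R\bigl(1-\delta(\epsilon)\bigr)\le h(\beta)-h(\alpha)+o(1)$, where $\delta(\epsilon):=\lim \log V_{\epsilon}/\bigl(Gh(L/G)\bigr)$ along the sequence of parameters.

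The obstacle, and the reason a rate-independent $\epsilon_0$ is delicate, is precisely this factor $\delta(\epsilon)$: a short Stirling estimate shows $\delta(\epsilon)>0$ for every $\epsilon>0$, so the bound above only reads $C_{\epsilon}\le \bigl(h(\beta)-h(\alpha)\bigr)/\bigl(1-\delta(\epsilon)\bigr)$, whose threshold collapses to $\epsilon\to0$ as the rate approaches $h(\beta)-h(\alpha)$ from above. A naive ball-counting converse therefore cannot certify equality at a fixed $\epsilon_0$. To remove the slack I would argue that the distortion relaxation buys nothing here, because when $R>h(\beta)-h(\alpha)$ the residual uncertainty about $\bold{S}$ cannot be localized to a small $\dist$-ball: the subsequences that remain jointly typical with the observed labels for some admissible $f\in\mathcal{F}_{L,m}$ are spread across pairs whose mutual $\dist$-distance is a constant fraction of $L$, rather than clustered within radius $\epsilon L$. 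Concretely, the dominant confusable configurations sit at overlap $L-k$ with $k=\Theta(L)$ — the same scale that I expect to govern the dominant error term in the achievability analysis — so for any $\epsilon$ below that constant fraction $\epsilon_0$ two genuinely indistinguishable subsequences are farther apart than $\epsilon L$, forcing $\pbb\bigl(\dist(\hat{\bold{S}},\bold{S})>\epsilon L\bigr)$ to stay bounded away from $0$.

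The hardest step will be making this spread quantitative: one must control, as a function of the overlap $L-k$, the exponent with which a subsequence at distance $2k$ from $\bold{S}$ is confusable with it, and show that the trade-off between the population $\binom{L}{k}\binom{G-L}{k}$ of such candidates and this exponent is optimized at $k=\Theta(L)$ rather than at $k=o(L)$. This is where the dependencies among the confusability events — induced by the shared unknown function $f$ and the overlapping coordinates, with the class $\mathcal{F}_{L,m}$ contributing negligibly under $m=o(N)$ — must be tamed, presumably by the same approximation-and-bounding technique (ignoring weak dependencies and controlling the incurred error) used for Theorem~\ref{thrm1}. Once the critical fraction is identified, I would take $\epsilon_0$ to be any constant strictly below it; since the error event only grows as $\epsilon$ shrinks, failure at scale $\epsilon_0$ propagates to every $\epsilon<\epsilon_0$, giving the matching converse $C_{\epsilon}(\alpha,\beta)\le h(\beta)-h(\alpha)$ on the whole interval and hence the claimed equality.
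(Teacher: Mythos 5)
Your achievability half is exactly the paper's: monotonicity of the capacities plus Theorem \ref{thrm1} gives $C_{\epsilon}(\alpha,\beta)\ge h(\beta)-h(\alpha)$, and your Fano-type bound with the ball-volume correction is precisely the paper's Lemma \ref{lemm1}, including the correct diagnosis that it only yields $C_{\epsilon}\le (h(\beta)-h(\alpha))/(1-\delta(\epsilon))$ and so cannot by itself certify a fixed $\epsilon_0$. The gap is in what you do next. Your proposed repair --- show that for $R>h(\beta)-h(\alpha)$ the jointly typical impostors are spread at mutual distance $\Theta(L)$, forcing $\pbb(\dist(\hat{\bold{S}},\bold{S})>\epsilon L)$ to stay bounded away from zero --- is exactly the step you label the hardest, and you never execute it. It is a lower bound on the error of \emph{every} algorithm, which would require (at least) a second-moment/existence argument that far-away confusable pairs $(\bold{t},g)$ survive with probability bounded away from zero, plus an argument that no decoder can break the resulting symmetry; nothing in your sketch, nor in the techniques you import from Theorem \ref{thrm1} (which are all union-bound \emph{upper} bounds on impostor events), supplies this. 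Your claim that the dominant confusable overlap sits at $k=\Theta(L)$ is also unsubstantiated: Theorem \ref{lemma2} suppresses all candidates at distance at least $L\epsilon$ at the same exponent $h(\beta)-h(\alpha)$, and the genuinely problematic candidates are the \emph{near} ones, where approximate independence fails.

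The paper closes the loop with an idea absent from your proposal: a complementary (boosting) procedure. Given any sequence of algorithms achieving $\epsilon$-error-rate at rate $R$, it re-decodes by running the joint-typicality decoder of Theorem \ref{thrm1} restricted to the ball $\mathcal{B}_{\tilde{\bold{s}},\epsilon}$ around the approximate output $\tilde{\bold{s}}$. The ball contains at most $L\epsilon\cdot 2^{Gh(2\epsilon L/G)}\le L\epsilon\cdot 2^{NR\delta(\epsilon)}$ candidates, while each impostor at distance greater than $L\epsilon'$ passes typicality with probability at most $2^{-N(h(\beta)-h(\alpha)-\zeta)}$ by Theorem \ref{lemma2}; the union bound converges precisely because the weak bound you derived, $R\le (h(\beta)-h(\alpha))/(1-\delta(\epsilon))$, forces $R\,\delta(\epsilon)<h(\beta)-h(\alpha)$ whenever $\delta(\epsilon)=h(\epsilon)<1/2$ --- and this, not any critical confusability fraction, is where $\epsilon_0$ (numerically about $0.1$) comes from. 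Thus $\epsilon$-achievability upgrades to $\epsilon'$-achievability for every $\epsilon'\in(0,\epsilon)$, hence to full achievability, and Theorem \ref{thrm1} yields $R\le h(\beta)-h(\alpha)$. In short: your converse has a genuine missing idea, and the ball-counting bound you correctly established is not a dead end to be replaced but the very input that makes the paper's boosting argument work.
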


The proof of Theorem \ref{thrm2} can be found in Section \ref{Proof2}.

\begin{remark}  \normalfont
It  may be surprising that the $\epsilon-$capacity is the same as the zero-error-rate capacity. This shows that there is no difference between the approximation of the associated subsequence  and the zero-error-rate retrieval in the asymptotic regimes, from the perspective of   sample complexity. 
\end{remark}

\begin{remark} \normalfont
To prove Theorem \ref{thrm2}, we develop a complementary procedure to convert any algorithm that approximates the associated subsequence, according to an error rate of at most $\epsilon$,  to another algorithm that retrieves it with the zero-error-rate condition. %Therefore, by Theorem \ref{thrm1} we conclude the desired result.
\end{remark}

\section{Proof of Theorem \ref{thrm1}}\label{Proof1}
\subsection{Achiveability }\label{achievability}
Let $R  < h(\beta) - h(\alpha)$ be a positive real number. We aim  to prove that $R$ is achievable. In particular, for any positive $\epsilon$, we want to show that $R$ is $\epsilon-$achievable. To this end, let us  introduce an algorithm  achieving this rate. 
The algorithm is a jointly typical decoder. First,    a few definitions are required.

Let $F(\bold{X}_{\bold{S}})$ denote the output of the model  depicted in Fig. 2,  where  $\bold{X},\bold{S}$, and $F(.)$ are uniform random instances from $\mathcal{X}^G$, $\mathcal{S}_{L,G}$, and $\mathcal{F}_{L,m}$,  respectively. 
Note that 
\begin{align}
p_{(F(\bold{X}_{\bold{S}}),Y)} (a,b)= &\Big(\gamma \mathbbm{1}\{a=1\}+ (1-\gamma)  \mathbbm{1}\{a=0\}   \Big ) \times \\
 &  \Big( \alpha \mathbbm{1} \{ a \neq b\} + (1-\alpha) \mathbbm{1} \{ a = b\}\Big),
\end{align} 
for any $a,b \in \{0,1\}$, where $\gamma:= m/|\mathcal{X}|^L$.
Also, due to the model,   
\begin{align}
p_{(F(\bold{X}_{\bold{S}}),Y)} (.,.)= p_{(F(\bold{X}_{\bold{s}}),Y)}(.,.) = p_{(f(\bold{X}_{\bold{s}}),Y)}(.,.),
\end{align} 
for any $f(.) \in \mathcal{F}_{L,m}$ and any $\bold{s} \in \mathcal{S}_{L,G}$.

Now for any positive $\tau$, let  $\mathcal{T}^{N}_{\tau}$  denote the set of all  jointly  typical binary sequences of length $N$, with respect  to  $p_{(F(\bold{X}_{\bold{S}}),Y)}$. More precisely, 
\begin{align*}
&\mathcal{T}_{\tau}^{N}:= \Big \{    (\bold{u},\bold{v}) \in \{0,1\}^N \times \{ 0 , 1\}^N : \\
& \Big |-\frac{1}{N} \log(\prod_{n=1}^N p_{F(\bold{X}_{\bold{S}})}(u_n)) - H(F(\bold{X}_{\bold{S}})) \Big |  < \tau  ,\\
 &\Big |-\frac{1}{N} \log(\prod_{n=1}^N p_{Y}(v_n)) - H(Y) \Big | < \tau , \\
& \Big |-\frac{1}{N} \log(\prod_{n=1}^N p_{(F(\bold{X}_{\bold{S}}),Y)}(u_n,v_n)) -H(F(\bold{X}_{\bold{S}}),Y) \Big | < \tau  \Big \}.
\end{align*}

In other words, $\mathcal{T}^N_{\tau}$ includes the pairs of binary sequences of length $N$ with empirical entropies $\tau-$close to the true entropies with respect to    $p_{(F(\bold{X}_{\bold{S}}),Y)}$
\cite{cover}. 
Note that  we only require  the parameters $\alpha,\gamma $  to test whether $(\bold{u},\bold{v}) \in \mathcal{T}_{\tau}^N$ or not\footnote{We just use the function $F(.)$  (or a non-random $f(.)$) and the sequence $\bold{S}$ (or a non-random $\bold{s}$) in the indices in notations to remind the definition of them in the proofs. }.

%end of red color txt 

The proposed algorithm is as follows.

\textit{Algorithm:}  For the given  dataset $( \{\bold{x}_n\}_{n\in [N]}, \{y_n\}_{n\in [N]})$, the algorithm chooses  $\hat{\bold{s}} \in \mathcal{S}_{L,G}$  with the following property: there is a function $f(.)  \in \mathcal{F}_{L,m}$, 
 such that 
the binary vectors $(f(\bold{x}_{n ,\hat{\bold{s}}}))_{n \in [N]}$ and $(y_n)_{n \in [N]}$ are jointly typical, i.e.,  
$\Big ((f(\bold{x}_{n ,\hat{\bold{s}}}))_{n \in [N]},(y_n)_{n \in [N]}\Big ) \in \mathcal{T}^N_{\tau}$.
  If there are more than one such $\hat{\bold{s}}$ with this property, the algorithm chooses one of them at random. If there is no such $\hat{\bold{s}}$, the algorithm chooses an element of $\mathcal{S}_{L,G}$  at random. We denote the proposed algorithm by $\mathcal{A}_G.$

\begin{remark} \normalfont
Note that the typicality decoder also achieves an estimation of the function $f(.)$. But since  in this paper we only require the asymptotically vanishing error-rate  (not the probability  $\mathbb{P}(\bold{\hat{S}} \neq \bold{S})$), we cannot bound the probability of error for the estimation of $f(.)$ in the typicality decoder. We also emphasize that the estimation of $f$ can be independently  carried out after the estimation of $\bold{s}$, and this is known to be a  simple  learning problem since the VC dimension of the class of functions considered in the paper is bounded ($m \ll N$).  
\end{remark}

 \textit{Analysis of the algorithm:}
 For any fixed positive $\epsilon$, we aim to prove that any $R < h(\beta) - h(\alpha)$ is $\epsilon-$achievable using the proposed algorithm. In particular, we are interested to show that $P^{WC}_{\epsilon}(\mathcal{A}_G) \rightarrow 0$ as $N,G,L \rightarrow \infty$. In other words, for any $\bold{s} \in \mathcal{S}_{L,G}$, we want to prove that the probability
  $\pbb(\frac{\dist(\hat{\bold{S}},\bold{S})}{L} > \epsilon  | \bold{S} =\bold{s}) $ goes to zero in the asymptotic regimes.

 Fix an arbitrary $\bold{s} \in \mathcal{S}_{L,G}$. 
 Consider the two events $\mathcal{E}_1$ and $\mathcal{E}_2$ as follows. $\mathcal{E}_1$ is the event that the associated subsequence $\bold{s}$ does not satisfy the   acceptance properties  of  the algorithm, i.e.,
 \begin{align}
\mathcal{E}_1 := \Big \{  \Big ((F(\bold{X}_{n ,{\bold{S}}}))_{n \in [N]},(Y_n)_{n \in [N]}\Big ) \not \in \mathcal{T}^N_{\tau} \Big |
\bold{S} = \bold{s}
 \Big \}.
\end{align}

 Also $\mathcal{E}_2$  is the event that there exists at least one $\bold{t} \in \mathcal{S}_{L,G}$ such that $\bold{t}$ satisfies the acceptance properties of  the algorithm and $\dist(\bold{s},\bold{t})> L \epsilon$.
 To be more precise, let us  define
 \begin{align}
\mathcal{E}_{\bold{t},g} := \Big \{
\Big ((g(\bold{X}_{n ,{\bold{t}}}))_{n \in [N]},(Y_n)_{n \in [N]}\Big ) \in \mathcal{T}^N_{\tau}
\Big |
\bold{S} = \bold{s}
\Big \}.
\end{align}
and let
 \begin{align}
\mathcal{E}_{2} :=  
\bigcup_{
\substack{
            \bold{t} \in \mathcal{S}_{L,G}\\
            g(.) \in \mathcal{F}_{L,m}\\
            \dist(\bold{s},\bold{t})> L \epsilon
		}
} \mathcal{E}_{
\bold{t},g
}.
\end{align}
 
 Note that to  prove  the $\epsilon-$achievability of $R$,  it just suffices to show that $\pbb(\mathcal{E}_1  \cup \mathcal{E}_2 ) \rightarrow 0$. 
 Using the union of events bound, $\pbb(\mathcal{E}_1  \cup \mathcal{E}_2 ) \le \pbb(\mathcal{E}_1)+\pbb( \mathcal{E}_2 )$ and thus it suffices to show that   $\pbb(\mathcal{E}_1)$  and $\pbb(\mathcal{E}_2)$ vanish in the asymptotic regimes. 
 We note that    $\pbb(\mathcal{E}_1)$ vanishes in the asymptotic regimes, using the law of large numbers. Therefore,  to complete  the proof, it suffices to show that $\pbb(\mathcal{E}_2)$ vanishes  asymptotically.
 
 Let us state a lemma. 
  
  \begin{lemma}\label{lemma1}\cite[Theorem 3.5]{fondml} (Sauer's lemma)
 For any positive integers $N,m$ and any ${\bold{t}}\in \mathcal{S}_{L,G}$,  
  \begin{align*}
 \max_{\forall n\in [N]: \bold{x}_n \in   \mathcal{X}^G}&\Big | \Big \{(g(\bold{x}_{n ,{\bold{t}}}))_{n \in [N]} \in \{0,1\}^N  :  g(.) \in \mathcal{F}_{L,m} \Big \} \Big | \\
 & \le   \sum\limits_{i=0}^d {   N \choose  i},  
\end{align*}
where $d$ is the VC dimension\footnote{Vapnik-Chervonenkis dimension.} of the class of functions $\mathcal{F}_{L,m}$.
  \end{lemma}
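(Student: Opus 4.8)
The quantity inside the maximum is, by definition, the number of distinct binary labelings (dichotomies) that the class $\mathcal{F}_{L,m}$ induces on the $N$ input points $\bold{x}_{1,\bold{t}},\ldots,\bold{x}_{N,\bold{t}} \in \mathcal{X}^L$. Since $\bold{t}$ has $L$ distinct entries, as $\bold{x}_1,\ldots,\bold{x}_N$ range over $\mathcal{X}^G$ the restrictions $\bold{x}_{n,\bold{t}}$ range freely over all of $\mathcal{X}^L$; hence the left-hand side equals the $N$-th shatter coefficient (growth function) $\Pi_{\mathcal{F}_{L,m}}(N) := \max_{z_1,\ldots,z_N \in \mathcal{X}^L} \big| \{(g(z_1),\ldots,g(z_N)) : g \in \mathcal{F}_{L,m}\}\big|$. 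The claim is therefore exactly the classical Sauer--Shelah bound $\Pi_{\mathcal{F}_{L,m}}(N) \le \sum_{i=0}^d \binom{N}{i}$, and the only structural fact I use about $\mathcal{F}_{L,m}$ is that its VC dimension equals $d$.

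The plan is to prove the abstract statement: for any class $\mathcal{H} \subseteq \{0,1\}^{\mathcal{Z}}$ of VC dimension $d$ and any $z_1,\ldots,z_N \in \mathcal{Z}$, the number of dichotomies $\mathcal{H}|_S$ on $S = \{z_1,\ldots,z_N\}$ is at most $\sum_{i=0}^d \binom{N}{i}$. I identify each dichotomy with the set of indices it labels $1$, so $\mathcal{H}|_S$ becomes a family of subsets of $[N]$. The argument is by induction on $N$, with the statement quantified over every class and every value of $d$; the base cases $N=0$ and $d=0$ are immediate, since then the right-hand side is $1$ and $\mathcal{H}|_S$ has a single element.

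For the inductive step, I would remove the last point $z_N$ and split $\mathcal{H}|_S$ according to its projection onto $S' = \{z_1,\ldots,z_{N-1}\}$. Let $\mathcal{H}|_{S'}$ be the set of projected dichotomies, and let $\mathcal{H}_1 \subseteq \mathcal{H}|_{S'}$ consist of those projections that lift to \emph{two} dichotomies of $\mathcal{H}|_S$ (both the $0$- and the $1$-extension at $z_N$ occur). A direct count gives $|\mathcal{H}|_S| = |\mathcal{H}|_{S'}| + |\mathcal{H}_1|$. Now $\mathcal{H}|_{S'}$ has VC dimension at most $d$, so by induction $|\mathcal{H}|_{S'}| \le \sum_{i=0}^d \binom{N-1}{i}$, while $\mathcal{H}_1$ has VC dimension at most $d-1$, giving $|\mathcal{H}_1| \le \sum_{i=0}^{d-1}\binom{N-1}{i}$. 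Pascal's rule $\binom{N-1}{i}+\binom{N-1}{i-1}=\binom{N}{i}$ then combines the two bounds into $\sum_{i=0}^d \binom{N}{i}$, closing the induction.

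The main obstacle, and the one genuinely clever step, is establishing that the ``doubled'' family $\mathcal{H}_1$ has VC dimension strictly smaller than $d$; everything else is bookkeeping with binomial coefficients. I would argue this carefully: if $\mathcal{H}_1$ shatters a set $T \subseteq S'$, then every subset of $T$ is realized on $S'$ together with both possible labels at $z_N$, which is precisely the requirement that $\mathcal{H}|_S$ shatter $T \cup \{z_N\}$; hence $|T|+1 \le d$, i.e.\ $|T| \le d-1$. This lifting of a shattering of $T$ by $\mathcal{H}_1$ to a shattering of $T\cup\{z_N\}$ by $\mathcal{H}|_S$ is the heart of the proof, and I would state it explicitly rather than leave it as an exercise.
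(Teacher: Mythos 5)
Your proof is correct, and it matches the paper's treatment: the paper gives no proof of this lemma, citing it as Sauer's lemma from \cite[Theorem 3.5]{fondml}, and your induction on $N$ with the split $|\mathcal{H}|_S| = |\mathcal{H}|_{S'}| + |\mathcal{H}_1|$, closed by Pascal's rule and the key lifting step (a set $T$ shattered by the doubled family $\mathcal{H}_1$ yields a shattering of $T \cup \{z_N\}$, forcing $|T| \le d-1$), is precisely the classical Sauer--Shelah argument proved in that reference. Your preliminary reduction is also sound: since the entries of $\bold{t}$ are distinct, the map $\bold{x} \mapsto \bold{x}_{\bold{t}}$ is onto $\mathcal{X}^L$, so the left-hand side is exactly the growth function of $\mathcal{F}_{L,m}$ at $N$ points, to which the abstract bound applies.
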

 
   \begin{corollary}\label{corollary2}
   For any positive integers $N \ge m$ and any $\bold{t} \in \mathcal{S}_{L,G}$, we have
   \begin{align*}
   \max_{\forall n\in [N]: \bold{x}_n \in   \mathcal{X}^G}&\Big | \Big \{(g(\bold{x}_{n ,{\bold{t}}}))_{n \in [N]} \in \{0,1\}^N :  g(.) \in \mathcal{F}_{L,m} \Big \} \Big | \\
   & \le (\frac{eN}{m})^m.
   \end{align*}
   \end{corollary}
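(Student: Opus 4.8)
The plan is to combine Sauer's lemma (Lemma \ref{lemma1}) with an upper bound on the VC dimension $d$ of the class $\mathcal{F}_{L,m}$ together with the classical binomial-sum estimate. First I would bound the VC dimension. Every $f \in \mathcal{F}_{L,m}$ satisfies $|f^{-1}(1)|=m$, so each such $f$ assigns the label $1$ to at most $m$ points of its domain $\mathcal{X}^L$. Hence if a set $\{a_1,\ldots,a_k\}\subseteq \mathcal{X}^L$ were shattered by $\mathcal{F}_{L,m}$, the all-ones labeling of these $k$ points would have to be realized by some $f\in\mathcal{F}_{L,m}$, forcing $k\le m$. Therefore no subset of size larger than $m$ can be shattered, and we obtain $d\le m$.

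Given this, I would invoke Lemma \ref{lemma1} and use $d \le m \le N$. Since the binomial coefficients are nonnegative, the partial sum is monotone in its upper index, so $\sum_{i=0}^{d}\binom{N}{i}\le \sum_{i=0}^{m}\binom{N}{i}$. It then remains to establish the standard inequality $\sum_{i=0}^{m}\binom{N}{i}\le (eN/m)^m$ for $N\ge m$. This follows by multiplying the $i$-th term by $(N/m)^{m-i}\ge 1$, extending the summation to all indices up to $N$ to recover a full binomial expansion, and finally applying $(1+m/N)^N\le e^m$:
\begin{align}
\sum_{i=0}^{m}\binom{N}{i} \le \left(\frac{N}{m}\right)^m \sum_{i=0}^{N}\binom{N}{i}\left(\frac{m}{N}\right)^i = \left(\frac{N}{m}\right)^m\left(1+\frac{m}{N}\right)^N \le \left(\frac{eN}{m}\right)^m.
\end{align}
Chaining these bounds yields the statement of Corollary \ref{corollary2}.

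The only genuinely delicate step is the VC-dimension estimate, and the key is that the defining constraint $|f^{-1}(1)|=m$ is an equality: it caps the total number of ones each admissible function can output, so the all-ones pattern on any shattered set immediately limits that set's size to $m$. Once $d\le m$ is in hand, the rest is a routine application of Sauer's lemma followed by the well-known binomial-tail bound.
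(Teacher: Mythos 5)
Your proof is correct and follows essentially the same route as the paper: bound the VC dimension of $\mathcal{F}_{L,m}$ by $m$ (since the all-ones labeling on a shattered set must be realized by a function with exactly $m$ ones), then apply Sauer's lemma and the standard estimate $\sum_{i=0}^{m}\binom{N}{i}\le (eN/m)^m$. The only difference is that you spell out the VC-dimension argument and the binomial-sum bound explicitly, whereas the paper asserts the former in a footnote and cites \cite[Corollary 3.3]{fondml} for the latter.
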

   \begin{IEEEproof}
  The VC dimension of the class of functions $\mathcal{F}_{L,m}$ can be upper bounded\footnote{
   Indeed, it can be shown that $d = \min(m,|\mathcal{X}|^L-m).$ 
   } by $m$. Therefore, using Lemma \ref{lemma1} and \cite[Corollary 3.3]{fondml}, we can establish the desired result. 
   \end{IEEEproof}
   
Corollary \ref{corollary2} relates the parameter $m$ (which is related to the VC dimension of the class of functions considered) to  the number of possible observable patterns, given $N$ instances of a function.
This bound is used in the 
asymptotic analysis of the algorithm $\mathcal{A}_G$ by the union bound.
Note that the condition of Corollary \ref{corollary2} is satisfied, since   $m=o(N)$.

  \begin{theorem}\label{lemma2}
 For any  positive  real numbers $\zeta, \epsilon$, and any $\bold{s} \in \mathcal{S}_{L,G}$,  with  probability $1-o(1)$,  the following proposition holds:
 \begin{enumerate}[]
\item  
For any   $\bold{t} \in \mathcal{S}_{L,G}$ and any function $g(.) \in \mathcal{F}_{L,m}$, such that  $\dist(\bold{s},\bold{t})> L\epsilon$,  the probability   that  $\bold{t}$  satisfies  the acceptance conditions in the proposed algorithm  via the function $g(.)$ 
(i.e., the probability of the event $ \mathcal{E}_{
\bold{t},g
}$)
is upper bounded by $2^{-N  (h(\beta) - h(\alpha)  - \zeta)  }$. 
\end{enumerate}
  \end{theorem}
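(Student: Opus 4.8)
The plan is to recognize the target exponent $h(\beta)-h(\alpha)$ as the mutual information $I(F(\bold{X}_{\bold{S}});Y)$. Since $Y=F(\bold{X}_{\bold{S}})\oplus Z$ with $Z\sim\mathrm{Bernoulli}(\alpha)$ independent of $F(\bold{X}_{\bold{S}})\sim\mathrm{Bernoulli}(\gamma)$, where $\gamma:=m/|\mathcal{X}|^L=(\beta-\alpha)/(1-2\alpha)\in(0,1)$, we have $H(Y\mid F(\bold{X}_{\bold{S}}))=h(\alpha)$, hence $I(F(\bold{X}_{\bold{S}});Y)=H(Y)-h(\alpha)=h(\beta)-h(\alpha)$. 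So the statement is the familiar ``a wrong candidate is jointly typical with the observation with probability $\approx 2^{-NI}$'', and the entire difficulty is that the candidate sequence $(g(\bold{X}_{n,\bold{t}}))_n$ is \emph{not} independent of $(Y_n)_n$: they are coupled through the indices $A:=\bold{s}\cap\bold{t}$ that $\bold{t}$ shares with the true $\bold{s}$.

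First I would fix $\bold{t},g$ with $\dist(\bold{s},\bold{t})>L\epsilon$ and split the index sets into $A=\bold{s}\cap\bold{t}$, $B=\bold{t}\setminus\bold{s}$, $C=\bold{s}\setminus\bold{t}$; since $|\bold{s}|=|\bold{t}|=L$ we get $|B|=|C|=\tfrac12\dist(\bold{s},\bold{t})>\tfrac{L\epsilon}{2}$. The pairs $(U_n,Y_n):=(g(\bold{X}_{n,\bold{t}}),Y_n)$ are i.i.d.\ in $n$ with common law $p_{U,Y}$, and since $U_n,Y_n$ are conditionally independent given $\bold{X}_{n,A}$ (the only shared randomness), $p_{U,Y}$ has the correct marginals $\mathrm{Bernoulli}(\gamma)$ and $\mathrm{Bernoulli}(\beta)$. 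I would then bound $\pbb(\mathcal{E}_{\bold{t},g})=\pbb((U^N,Y^N)\in\mathcal{T}^N_{\tau})$ in two stages: (i) the standard joint-AEP estimate shows that under the product of the marginals the $p^{*}$-typical set (with $p^{*}:=p_{(F(\bold{X}_{\bold{S}}),Y)}$) has probability at most $2^{-N(I(F(\bold{X}_{\bold{S}});Y)-3\tau)}$; and (ii) a pointwise likelihood-ratio bound $\prod_n \tfrac{p_{U,Y}(u_n,v_n)}{p_U(u_n)p_Y(v_n)}\le 2^{N\eta}$ transfers this estimate from the product law to $p_{U,Y}$ at a multiplicative cost $2^{N\eta}$, where $\eta\to0$ as $p_{U,Y}$ approaches the product of its marginals (all four atoms stay bounded away from $0$ because $\alpha,\beta,\gamma\in(0,1)$, so the ratio is controlled). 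Combining, $\pbb(\mathcal{E}_{\bold{t},g})\le 2^{-N(I-3\tau-\eta)}$, and it remains only to make $U_n,Y_n$ almost independent.

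The heart of the argument is therefore a near-independence estimate, and this is where the event of probability $1-o(1)$ enters. Writing $b_f(a):=\pbb(f(\bold{X}_{\bold{s}})=1\mid \bold{X}_A=a)$ and $u_g(a):=\pbb(g(\bold{X}_{\bold{t}})=1\mid\bold{X}_A=a)$, a direct computation (using $\beta=\alpha+(1-2\alpha)\gamma$) gives $\mathrm{Cov}(U_n,Y_n)=(1-2\alpha)\,\mathrm{Cov}_a(u_g(a),b_f(a))$, so that $|\mathrm{Cov}(U_n,Y_n)|\le \max_a|b_f(a)-\gamma|$ \emph{uniformly over} $g$; since for binary variables with fixed marginals the joint law is pinned down by the covariance, a small $\max_a|b_f(a)-\gamma|$ forces $\eta$ small. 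Now $b_f(a)$ is the average of the random indicator $f$ over the $|\mathcal{X}|^{|C|}$ points of the slice $\{\bold{X}_A=a\}$, it has mean $\gamma$, and because $f$ is a uniformly random $m$-subset indicator this is a hypergeometric average, so a Serfling/Hoeffding inequality yields $\pbb(|b_f(a)-\gamma|>t)\le 2\exp(-2t^2|\mathcal{X}|^{|C|})$ with $|\mathcal{X}|^{|C|}\ge|\mathcal{X}|^{L\epsilon/2}$. I would let the good event $\mathcal{G}:=\{\max_{A,a}|b_f(a)-\gamma|\le t\}$, the maximum running over $A\subseteq\bold{s}$ and $a\in\mathcal{X}^{|A|}$; a union bound costs a factor whose logarithm is only $O(L)$ (at most $2^{L}$ subsets times $|\mathcal{X}|^{L}$ values $a$), negligible against the concentration exponent $2t^2|\mathcal{X}|^{L\epsilon/2}$. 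Hence $\pbb(\mathcal{G})=1-o(1)$, and on $\mathcal{G}$ the bound $\pbb(\mathcal{E}_{\bold{t},g})\le 2^{-N(h(\beta)-h(\alpha)-\zeta)}$ holds simultaneously for \emph{every} $\bold{t}$ and $g\in\mathcal{F}_{L,m}$ once $t$ (through $\eta$) and $\tau$ are small enough that $3\tau+\eta<\zeta$.

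The step I expect to be the main obstacle is exactly this uniform near-independence claim, i.e.\ controlling $\max_a|b_f(a)-\gamma|$ over all candidate $\bold{t}$ at once. The saving grace is the enormous gap in scales: because $L/G$ is a fixed constant, the number of slices to control grows only singly-exponentially in $L$ (both $\log\binom{G}{L}=Gh(L/G)$ and $L\log|\mathcal{X}|$ are $\Theta(L)$), whereas each slice concentrates at the doubly-exponential rate $\exp(-\Theta(|\mathcal{X}|^{L\epsilon/2}))=\exp(-2^{\Theta(L)})$. I would highlight that, because $\mathcal{G}$ depends on $f$ only through the slice-averages $b_f$, \emph{no} union over $g$ is required here (so Corollary \ref{corollary2} is not strictly needed for this lemma, though it remains available should one prefer to bound $I(U;Y)$ per $g$ and union over the $(eN/m)^m$ distinct patterns). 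The two auxiliary facts I have used freely and would verify are routine: that entropy-typicality plus the pointwise likelihood-ratio bound legitimately transports the product-measure AEP estimate to $p_{U,Y}$, and that in the binary case the covariance alone controls the total-variation distance between $p_{U,Y}$ and the product of its marginals, so that $D(p^{*}\,\|\,p_{U,Y})\to I(F(\bold{X}_{\bold{S}});Y)$ by continuity.
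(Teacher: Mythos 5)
Your proof is correct, and it shares the paper's overall architecture --- a high-probability ``good event'' over the random function $F$, a per-letter approximate-independence bound holding uniformly over all far $\bold{t}$ and all $g$, and a joint-AEP estimate with a multiplicative correction (your $2^{N\eta}$ is the paper's $\mu^N$) --- but your execution of the central step is genuinely different and more elementary. The paper reaches per-letter near-independence through $f$-divergence machinery: data processing for total variation along the Markov chain $J_{\bold{t},g}-\bold{X}_{\bold{t}}-\bold{X}_{\ints(\bold{s},\bold{t})}-\bold{X}_{\bold{s}}-J_{\bold{s},F}-Y$ (Corollary \ref{corollary1}), reduction of the $\ell_1$ distance to $\max_{\bold{x}'}\big|P_{J_{\bold{s},F}\mid \bold{X}_{\ints(\bold{s},\bold{t})}=\bold{x}'}(1)-\gamma\big|$ (Lemma \ref{lemma7}), and conversion of small total variation into the multiplicative bound $p_{UV}\le\mu\, p_U p_V$ (Definition \ref{appind} and Lemma \ref{lemma6}); you instead exploit the binary/XOR structure directly via the exact identity $\mathrm{Cov}(U_n,Y_n)=(1-2\alpha)\,\mathrm{Cov}_a\big(u_g(a),b_f(a)\big)$ and the observation that a joint law of two bits with fixed marginals is pinned down by its covariance. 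Both routes reduce the lemma to the same scalar quantity $\max_a|b_f(a)-\gamma|$, and both control it by the identical concentration phenomenon: $b_f(a)$ is a without-replacement average over a slice of size $|\mathcal{X}|^{|C|}\ge|\mathcal{X}|^{L\epsilon/2}$, with a union over only singly-exponentially many slices against a doubly-exponential tail (the paper's Lemma \ref{lemma13}). You invoke Serfling/Hoeffding for sampling without replacement as a known fact, where the paper re-derives the needed inequality by conditioning i.i.d.\ Bernoullis on their modal sum (Lemmas \ref{app} and \ref{lemma10}), paying a harmless factor $|\mathcal{X}|^L+1$; your citation buys brevity, the paper's derivation buys self-containedness, and your covariance computation replaces the paper's more generic data-processing chain, which would extend beyond binary labels where your pinning argument would not. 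Your parenthetical that no union over $g$ is needed for this lemma is exactly right and mirrors the fact that the paper's Lemma \ref{appendlemma1} is deterministic in $g$; Corollary \ref{corollary2} indeed enters only afterwards, in the union bound over $(\bold{t},g)$ in the achievability analysis. Two points to tighten: the maximum defining your good event $\mathcal{G}$ must range only over sets $A$ that can arise as $\bold{s}\cap\bold{t}$ with $\dist(\bold{s},\bold{t})>L\epsilon$, i.e.\ $|A|<L(1-\epsilon/2)$ --- for larger $A$ the slice size, hence the concentration exponent, degenerates (the paper encodes this restriction via $\mathcal{I}_{\epsilon}(\bold{s})$ and Lemma \ref{lemmatau}); and your bound on $\eta$ requires $\gamma=m/|\mathcal{X}|^L$ bounded away from $0$ and $1$, which is the same implicit assumption the paper makes in its choice of $\kappa$ (whose denominator contains $\min(\gamma,1-\gamma)$) and in its footnote $m=\Theta(|\mathcal{X}|^L)$.
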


\begin{IEEEproof}
See Appendix \ref{appendA}. 
\end{IEEEproof}

Using Corollary \ref{corollary2}, to  analyze the  algorithm $\mathcal{A}_G$, it just suffices to check at most $(eN/m)^m$ functions.
Let us denote the event in the statement of Theorem \ref{lemma2}, which holds with probability $1-o(1)$,  by $\mathcal{E}_3$.
   Now  we write
\begin{align}
\pbb(\mathcal{E}_2) & = \pbb(\mathcal{E}_2|\mathcal{E}^c_3) \pbb(\mathcal{E}^c_3) + \pbb(\mathcal{E}_2|\mathcal{E}_3) \pbb(\mathcal{E}_3)  \\& \le \pbb(\mathcal{E}^c_3) + \pbb(\mathcal{E}_2 | \mathcal{E}_3)    \\
&   \le o(1) + \pbb \Big(\bigcup_{
\substack{
            \bold{t} \in \mathcal{S}_{L,G}\\
            g(.) \in \mathcal{F}_{L,m}\\
            \dist(\bold{s},\bold{t})> L \epsilon
		} 
} \mathcal{E}_{\bold{t},g} ~\Big |~ \mathcal{E}_3  ~ \Big)     \\
& \overset{(a)}{\le} o(1) +  \sum_{
\substack{
            \bold{t} \in \mathcal{S}_{L,G}\\
            \dist(\bold{s},\bold{t})> L \epsilon
		}
}(\frac{eN}{m})^m \times 2^{-N (h(\beta) - h(\alpha)  - \zeta )}  \\
&  \le o(1) +  {G \choose L}  \times 2^{  m\log (
\frac{eN}{m})  -N(h(\beta) - h(\alpha)  - \zeta ) }  ,
\end{align}
where (a) follows by Corollary \ref{corollary2}, Theorem \ref{lemma2},    and also   the union bound. 
Using \cite[Chapter  11, p.~353]{cover},  we have $ {G \choose L} \le 2^{ G h(L/G)}$. Therefore,  
\begin{align}
\pbb(\mathcal{E}_2) &  \le o(1)  + 2^{  G  h(L/G) +  m \log (\frac{eN}{m})  -N(h(\beta) - h(\alpha)  - \zeta )}     \\
  &=  o(1) + 2^{  N  \big  (R +  \frac{m}{N} \log (\frac{eN}{m})   -(h(\beta) - h(\alpha) -  \zeta ) \big)} .
\end{align}
We note that 
$\frac{m}{N} \log (\frac{eN}{m}) \to 0,$ since $m = o(N)$.
Thus,  $\pbb(\mathcal{E}_2)$ vanishes asymptotically if 
\begin{align}
R - h(\beta) - h(\alpha) -\zeta <0.
\end{align}  
This shows that by choosing  small enough $\zeta$,  any $R  < h(\beta) - h(\alpha)$ is $\epsilon-$achievable. This holds for any positive $\epsilon$ and thus completes the proof.

\subsection{ Converse Proof }\label{converse}
In this section, we prove the converse part of  Theorem \ref{thrm1}. First we state a lemma.

\begin{lemma}\label{lemm1}
 For any positive $\epsilon \in (0,1/2)$, let $R$ be an $\epsilon-$achievable rate. Then,  
 \begin{align}
	R \le \frac{h(\beta)-h(\alpha)}{1-h(\epsilon)}.
\end{align} 
  \end{lemma}

\begin{IEEEproof}
See appendix \ref{prooflemm3}.
\end{IEEEproof}
  
Now consider  an achievable rate $R$. By definition, for any positive $\epsilon$, $R$ is $\epsilon-$achievable. Using Lemma \ref{lemm1}, we conclude that  $R\le \frac{h(\beta)-h(\alpha)}{1-h(\epsilon)},$ for any positive $\epsilon.$ Therefore,   
\begin{align}
R \le \inf_{\epsilon \in (0,1/2)} \frac{h(\beta)-h(\alpha)}{1-h(\epsilon)}=h(\beta)-h(\alpha),
\end{align} 
and this completes the proof.

\section{Proof of Theorem \ref{thrm2}}\label{Proof2}

The achievability proof of Theorem \ref{thrm2} directly follows from Theorem \ref{thrm1}.
 Therefore, only the converse proof is required.

Let that $R$ be an $\epsilon-$achievable rate. The goal is to show that $R \le h(\beta) - h(\alpha).$ First, according to Lemma \ref{lemm1}, $R \le \frac{h(\beta) - h(\alpha)}{1-h(\epsilon)}$. Consider a sequence of algorithms $\{\tilde{\mathcal{A}}_{G_i}\}_{i \in \mathbb{N}}$, such that $P^{WC}_{\epsilon}(\tilde{\mathcal{A}}_{G_i}) \rightarrow 0$. Define a complementary procedure as follows. 
Denote the output of the algorithm $\tilde{\mathcal{A}}_{G_i}$ by $\tilde{\bold{s}}$. 
Let $\mathcal{B}_{\tilde{\bold{s}}, \epsilon}$ be the ball with radius  $L\epsilon$ in $\mathcal{S}_{L,G}$ around $\tilde{\bold{s}}$, with respect to the distance  $\dist(.,.)$.
More precisely, we define 
\begin{align}
\mathcal{B}_{\tilde{\bold{s}}, \epsilon} :=  \Big \{ 
\bold{t} \in \mathcal{S}_{L,G} : \dist(\tilde{\bold{s}},\bold{t}) \le L\epsilon 
\Big \}.
\end{align}

 Now we apply the proposed algorithm in the achievability proof of Theorem \ref{thrm1} to find   $\hat{\bold{s}} \in \mathcal{B}_{\tilde{\bold{s}}, \epsilon}$ as the estimation of $\bold{s}$.
More precisely, find the one $\hat{\bold{s}} \in \mathcal{B}_{\tilde{\bold{s}}, \epsilon}$ such that 
\begin{align}
\Big ((f(\bold{x}_{n ,\hat{\bold{s}}}))_{n \in [N]},(y_n)_{n \in [N]}\Big ) \in \mathcal{T}^N_{\tau},
\end{align}
for some $f(.) \in \mathcal{F}_{L,m}$.
If there is not such $\hat{\bold{s}}$, choose one of the elements of $\mathcal{B}_{\tilde{\bold{s}}, \epsilon}$ randomly. If there is more than one such element, choose one of them randomly.

We aim to prove that the output of this complementary algorithm is a zero-error-rate estimation of $\bold{s}$, with high probability. In particular, for any $\epsilon' \in (0,\epsilon)$, we show that the output of the complementary procedure has at most $\epsilon'$-error-rate, with high probability in the asymptotic regimes. 
More precisely, we prove that
$\pbb(\frac{\dist(\hat{\bold{S}},\bold{S})}{L} > \epsilon' | \bold{S} =\bold{s}) \to 0 $ for any  $\epsilon' \in (0,\epsilon)$.

Fix a positive $\epsilon' \in (0,\epsilon)$.
Define two events $\mathcal{E}_1$ and $\mathcal{E}_2$ similar to the achievability proof of Theorem \ref{thrm1}.
More precisely,  define
 \begin{align*}
\mathcal{E}_1 := &\underbrace{ \Big \{  \Big ((F(\bold{X}_{n ,{\bold{S}}}))_{n \in [N]},(Y_n)_{n \in [N]}\Big ) \not \in \mathcal{T}^N_{\tau} \Big |
\bold{S} = \bold{s}
 \Big \} }_{\mathcal{E}_{11}} \bigcup  \\
    & \underbrace{ \Big \{  \bold{S} \not \in \mathcal{B}_{\tilde{\bold{S}}, \epsilon} \Big |
\bold{S} = \bold{s}
 \Big \} }_{\mathcal{E}_{12}}  .
\end{align*}
In addition, we define
 \begin{align*}
\mathcal{E}'_{\bold{t},g} :=& \underbrace{ \Big \{
\Big ((g(\bold{X}_{n ,{\bold{t}}}))_{n \in [N]},(Y_n)_{n \in [N]}\Big ) \in \mathcal{T}^N_{\tau}
\Big |
\bold{S} = \bold{s}
\Big \} }_{\mathcal{E}_{\bold{t},g} } \bigcap  \\
& \Big \{
\bold{t} \in \mathcal{B}_{\tilde{\bold{S}}, \epsilon}
\Big |
\bold{S} = \bold{s}
\Big \}  ,
\end{align*}
and
 \begin{align}
\mathcal{E}'_{2} :=  
\bigcup_{
\substack{
%\bold{t} \in \mathcal{B}_{\bold{s}, 2\epsilon'}\\
            \bold{t} \in \mathcal{S}_{L,G}\\
            g(.) \in \mathcal{F}_{L,m}\\
            \dist(\bold{s},\bold{t})> L \epsilon'
				}
} \mathcal{E}'_{
\bold{t},g
} .
\end{align}
 Note that $\mathcal{E}'_2$ is  defined with respect to the parameter $\epsilon'$. To complete the proof, we need to show that $\pbb(\mathcal{E}_{1} \cup \mathcal{E}'_{2}) \to 0$. 
 Define
\begin{align}
\mathcal{E}_{2} :=  
\bigcup_{
\substack{
\bold{t} \in \mathcal{B}_{\bold{s}, 2\epsilon}\\
%            \bold{t} \in \mathcal{S}_{L,G}\\
            g(.) \in \mathcal{F}_{L,m}\\
            \dist(\bold{s},\bold{t})> L \epsilon'
				}
} \mathcal{E}_{
\bold{t},g
} .\end{align}

\begin{claim}$\mathcal{E}_{1} \cup \mathcal{E'}_{2} \subseteq \mathcal{E}_{1} \cup \mathcal{E}_{2}.$
\end{claim}

\begin{IEEEproof}
The claim directly follows from the definitions.
\end{IEEEproof}
 
 As a result, the proof is complete if 
 $\pbb(\mathcal{E}_{1} \cup \mathcal{E}_{2}) \to 0$ or $\pbb(\mathcal{E}_{1}) , \pbb(\mathcal{E}_{2})   \to 0$.
By the assumption, $\dist(\tilde{\bold{S}},\bold{s}) \le L\epsilon$, with high probability. 
In other words, $\pbb(\mathcal{E}_{12}) \to 0$. By the law of large numbers,    $\pbb(\mathcal{E}_{11}) \to 0$. Using  the union bound,  $\pbb (\mathcal{E}_1) \to 0$.

To complete the proof, we only need to show that  $
\pbb(\mathcal{E}_2)$ goes to zero asymptotically. 
 Similar to the analysis of the proposed algorithm in Theorem \ref{thrm1},  using Corollary \ref{corollary2} and Theorem \ref{lemma2},  we  write\footnote{Defnine $\mathcal{E}_3$ with respect to the parameter $\epsilon'$.}
  \begin{align}
&\pbb(\mathcal{E}_2)  \\&= \pbb(\mathcal{E}_2|\mathcal{E}^c_3) \pbb(\mathcal{E}^c_3) + \pbb(\mathcal{E}_2|\mathcal{E}_3) \pbb(\mathcal{E}_3)  \\& \le \pbb(\mathcal{E}^c_3) + \pbb(\mathcal{E}_2 | \mathcal{E}_3)   \\
&   \le o(1) + \pbb \Big (\bigcup_{
\substack{
           \bold{t} \in \mathcal{B}_{\bold{s}, 2\epsilon}\\
            g(.) \in \mathcal{F}_{L,m}\\
            \dist(\bold{s},\bold{t})> L \epsilon'
		} 
} \mathcal{E}_{\bold{t},g} ~\Big |~ \mathcal{E}_3  ~\Big)    \\
& \le o(1) +    \max_{\bold{s} \in \mathcal{S}_{L,G}}  |\mathcal{B}_{\bold{s}, 2\epsilon}  | \times (\frac{eN}{m})^m \times  2^{-N  (h(\beta) - h(\alpha)  - \zeta )}.\label{eq.1}
\end{align}
Now observe that
\begin{align}
 \log &(|\mathcal{B}_{\bold{s},2\epsilon}| )\\&\le
 \log   \Big  (\sum_{\ell=0}^{\lfloor 2L\epsilon \rfloor}  \Big |  \Big \{  \bold{t}  \in \mathcal{S}_{L,G}: \dist(\bold{s}, \bold{t})=\ell  \Big |  \Big  \} \Big )\\
& \le  \log   \Big (\sum_{\ell=0}^{\lfloor 2L\epsilon \rfloor}{ L \choose \ell}{G-L \choose \ell}  \Big )  \\
& \le  \log  \Big ( (2L \epsilon+1){ L \choose \lfloor 2L\epsilon \rfloor}{G-L\choose \lfloor  2L  \epsilon \rfloor}  \Big ) \\
& \overset{(a)}{\le} \log( (2L \epsilon+1)  \times 2 ^{ L h(2\epsilon)}\times  2^{(G-L)h(\frac{2L \epsilon }{G-L})})\\
&= \log ( 2L \epsilon+1)+ L h(2\epsilon)+ (G-L)h(\frac{2L \epsilon }{G-L})\\
& \overset{(b)}{\le}  \log ( 2L \epsilon+1) + G h(\frac{4\epsilon L}{G})\label{eq.2},
\end{align}
where (a) follows from \cite[Chapter  11, p.~353]{cover} and (b) follows from the concavity of the binary entropy  function $h(.)$. 
Therefore, using (\ref{eq.1}) and (\ref{eq.2}), we write
\begin{align*}
\pbb&(\mathcal{E}_2) \\ & \le o(1) +  \max_{\bold{s} \in \mathcal{S}_{L,G}} |\mathcal{B}_{\bold{s}, 2\epsilon}|  2^{m  \log (\frac{eN}{m})   -N  (h(\beta) - h(\alpha)  - \zeta )}\\
& \le  o(1) + (2L \epsilon+1)   2^{   G h(\frac{4\epsilon L}{G})   + m \log (\frac{eN}{m})    -N  (h(\beta) - h(\alpha)  - \zeta )} \\
&= o(1) +  (2L \epsilon+1)   2^{ N  \big( R  \frac{ h(4\epsilon L/G)}{h(L/G)}   + \frac{m}{N}  \log (\frac{eN}{m}) -(h(\beta) - h(\alpha)  - \zeta ) \big )} \\
& \overset{(a)}{\le} o(1) +   (2L \epsilon+1)  2^{   N \big ( R   h(2\epsilon)   + \frac{m}{N}  \log (\frac{eN}{m}) -(h(\beta) - h(\alpha)  - \zeta ) \big)},
\end{align*}
where (a) follows from Lemma \ref{lemma4}.

Now since $\frac{m}{N} \log (\frac{eN}{m}) \to 0$, 
we conclude that $\pbb(\mathcal{E}_2)$ vanishes asymptotically
\footnote{
The multiplicative factor $2L\epsilon+1$ does not make any problem, noting that $\frac{\log(L)}{N} \rightarrow 0$ asymptotically. 
This is due to the fact that by the definition of the problem $N \gg m  = \Theta (|\mathcal{X}^L|)$. 
}, if 
\begin{align}
	R \le \frac{h(\beta)- h(\alpha) - \zeta}{h(2\epsilon)},
\end{align}
for some positive $\zeta$. 
Let us assume  $h(2\epsilon) +h(\epsilon)< 1$. Using Lemma \ref{lemm1}, we have 
\begin{align}
R \le \frac{h(\beta)- h(\alpha) }{1-h(\epsilon)}  < \frac{h(\beta)- h(\alpha)  - \zeta}{h(2\epsilon)},
\end{align}	
for small enough $\zeta$. Thus,  if $R$ is $\epsilon-$achievable, then it is $\epsilon'-$achievable for any $\epsilon' \in (0,\epsilon)$. This means that $R$  is achievable. Therefore, using Theorem \ref{thrm1},  we conclude that $R \le h(\beta) - h(\alpha)$ and this completes the proof. 
Note that there is a positive $\epsilon_0$ such that for any $\epsilon < \epsilon_0$, we have  $h(2\epsilon) +h(\epsilon)< 1$.

\begin{remark} \normalfont
Numerical calculation shows that $\epsilon_0 \approx 0.075$ works for the converse of Theorem \ref{thrm2}.
 However, we do not claim that this is the optimum threshold. 
\end{remark}

\section{Conclusion and Discussion}\label{Conclusion}
In this paper, the  capacity of the associated subsequence retrieval problem, which is inspired by a biological  data analysis problem known as genome-wide association study (GWAS),  is studied. 
The fundamental limits of the sample complexity of the problem  are derived, for the zero-error-rate and the   $\epsilon$-error-rate regimes. 
In particular, it is shown that the two problems of the $\epsilon$-error-rate and the zero-error-rate associated subsequence retrieval are   equivalent.

For the future work, a number of  problems can be investigated which are listed below.
\begin{itemize}
\item In this paper, it is assumed that the dataset is homogeneous, i.e., the $N$ sampled sequences are associated to only one subsequence. However, it is more realistic to consider the mixed population datasets, where  the sampled sequences belong to more than one sub-population, where each sub-population has a specific associated subsequence and the population origin of the individuals are unknown. This problem has been studied recently in \cite{itw}. 
\item Another direction is to consider  the non i.i.d. sequences. Although the i.i.d. assumption plays an important role in the proofs of this paper, non i.i.d. sequences, such as stationary Markov models can also be explored for this problem. 
\item In this paper, it is assumed that the sampling of $N$ sequences is independent from their observed characteristics. However, it is more realistic  to consider the non-independent sampling. For example in the GWAS problem, there exist some cases  where  a specific phenotype  is  rare, and hence in the independently  sampled datasets, only a  a few proportion of the population display the characteristic. 
\item Another direction for the future work is to consider other notions of the probability of error for the associated subsequence retrieval. For example, it is worth to consider the exact retrieval condition, that is $\pbb( \hat{\bold{S}}\neq \bold{S}) \to 0$.  It can be shown  that if the function $f(.)$ is linear, and the sequences are binary, the capacity of associated subsequence retrieval is the same as the one appeared in this paper  \cite{iwcit}. However, for the general case, the capacity of the problem is unknown. 
\item In this paper, it is proved that there is a threshold effect in the cure of the probability of error for associated subsequence retrieval. A remaining problem is to explore how fast this probability of error goes to zero for the rates below the capacity. This problem, which is  known as the error exponents problem, is another direction for the further studies. 

\end{itemize}
%The first one  is to assume  that the environmental effects    are not  like   additive noises in labeling and to explore the capacity in a more general framework.  
% Another extension  is to consider the GWAS problem in the case that the labels or phenotypes are not binary-valued. Also  modeling the genome sequence by a more realistic probabilistic model is of the future of this work.

\appendices

\section{Proof of Theorem \ref{lemma2}}\label{appendA}

To prove Theorem \ref{lemma2}, we need   a few  preliminary definitions and lemmas which are available in the following two subsections.

\subsection{Preliminaries}
In this subsection,  we first review some definitions about the divergence measures on probability distributions, as well as their main properties.
\begin{definition}\label{def1}
Let $f:\mathbb{R}^{\ge0} \to \mathbb{R}$ be a convex function, such that $f(1) = 0$ and $f(t)$ is strictly convex at $t = 1$. Then, the $f-$divergence of any   (discrete) probability measures $p_U$ and $q_U$  on a finite set $\mathcal{U}$ is defined as    
\begin{align}
D_f(p_U||q_U) = \sum_{u \in \mathcal{U}} q_U(u) f \Big (\frac{p_U(u)}{q_U(u)} \Big ).
\end{align}
\end{definition}
We notice that the $f-$divergences satisfy the data processing inequality.
\begin{theorem}\label{thrm3} \cite{fdivergence} 
(Data processing inequality for   $f-$divergences). For any   (finite) probability  measures $p_U,q_U$ and any channel $p_{V|U}$, the following inequality holds.
\begin{align}
D_f(p_U||q_U) \ge D_f(p_Up_{V|U}||q_Up_{V|U}).
\end{align}
\end{theorem}

Note that the function $f(t) = \frac{1}{2}|1-t|$ satisfies the required conditions in Definition \ref{def1}. It can be shown that in this case, the $f-$divergence reduces to the total variation distance of two probability measures.

In the following definition, we define the $f-$information of two arbitrary (discrete) random variables. 

\begin{definition}
For any   (discrete) random variables $U$ and $V$, we define
\begin{align}
I_f(U;V) := D_f(p_{U,V}||p_Up_V).
\end{align}
\end{definition}

Specifically, for the case of $f(t) =  \frac{1}{2}|1-t|$, we can write
\begin{align}
I_f(U;V)  = \frac{1}{2}\| p_{U,V} - p_Up_V\|_1.
\end{align}

\begin{lemma}\label{lemmadata}
Consider three random variables $U,V,W$, such that    $U-V-W$ is a Markov chain.
Then, for any $f-$divergence we have
\begin{align}
I_f(U;V) \ge I_f(U;W).
\end{align}
\end{lemma}

\begin{IEEEproof}
First we define the following  channel 
\begin{align}
q_{T,W|U,V} := p_{W|V} p_{T|U}, 
\end{align}
where $p_{T|U}$ is the identity channel, i.e., $T=U$ with probability one. Note that
\begin{align}
I_f(U;V) &=  D_f(p_{U,V}||p_Up_V) \\
&\overset{(a)}{\ge} D_f(p_{U,V}q_{T,W|U,V}||p_Up_Vq_{T,W|U,V})\\
&=  D_f(p_{U,V}p_{W|V} p_{T|U}||p_Up_Vp_{W|V}  p_{T|U})\\
&\overset{(b)}{=}  D_f(p_{U,V}p_{W|U,V} p_{T|U}||p_Up_Vp_{W|V}  p_{T|U})\\
& =   D_f(p_{T,W}||p_Tp_W)\\
& = I_f(U;W),
\end{align}
which completes the proof. Note that   (a) follows from Theorem \ref{thrm3} and (b) follows from the fact that $p_{W|U,V}=p_{W|V}$.
\end{IEEEproof}

\begin{corollary}\label{corollary1}
Consider   random variables $U,V,W,T$, such that    $U-V-W-T$ is a Markov chain.
Then, 
\begin{align}
\| p_{U,T} - p_U p_T\|_{1} \le \| p_{V,W} - p_V p_W\|_{1} .
\end{align}
\end{corollary}

 \begin{IEEEproof}
 Consider $f(t) = \frac{1}{2}|1-t|$ and use Lemma \ref{lemmadata} twice. 
 \end{IEEEproof}

Note that  $\ell_1$ and $\ell_{\infty}$  norms are equivalent.
\begin{lemma}\label{lemmanorm}
For any $\bold{w} \in \mathbb{R}^n$,  
\begin{align}
\| \bold{w} \|_{\infty} \le \| \bold{w} \|_{1} \le n \| \bold{w} \|_{\infty}  
\end{align}
\end{lemma}

In what follows, we state a few  definitions about the dependency of  (discrete)  random variables.
\begin{definition}\label{appind}
For any (discrete) random variables $U,V$ and any   ${\mu} \in [1,\infty)$, we write $U ~\overset{{\mu}}{\bot}~ V$ if and only if  $p_{UV}(u,v)\le {\mu} \times  p_U(u)p_V(v)$ for all $u,v$.
\end{definition}
Note that for any independent random variables $U,V$, we have  $U ~\overset{1}{\bot}~ V$.
Also, if  $ U ~\overset{{\mu}}{\bot}~ V$, then    $U ~\overset{{\mu}'}{\bot}~ V$ for any ${\mu}' \ge {\mu}$.

The following lemma relates the above definition to the total variation distance.
\begin{lemma}\label{lemma6}
Assume that  $\|p_{U,V}-p_Up_V\|_1 \le \epsilon$ for  a  positive $\epsilon$. Then, for 
\begin{align}
{\mu} = 1+\frac{\epsilon}{\min\limits_u p_U(u)\times \min\limits_v p_V(v)},
\end{align}
we have $U ~\overset{{\mu}}{\bot}~ V$.
\end{lemma}
\begin{IEEEproof}
First we note that using Lemma \ref{lemmanorm}, we have \begin{align}
\|p_{U,V}-p_Up_V\|_{\infty}  \le \|p_{U,V}-p_Up_V\|_1\le \epsilon .
\end{align}
Hence, for any $u,v$,
\begin{align}
p_{U,V}(u,v) & \le p_U(u)p_V(v)+\epsilon\\
& = p_U(u)p_V(v) \big (1+\frac{\epsilon}{p_U(u)p_V(v)} \big )\\
& \le p_U(u)p_V(v) \big (1+\frac{\epsilon}{\min\limits_u p_U(u)\times \min\limits_v p_V(v)} \big )\\
&= {\mu} \times p_U(u)p_V(v),
\end{align}
which completes the proof.
\end{IEEEproof}

%We need the following lemma in some parts of proofs which are presented later.
\begin{lemma}\label{lemma7}
For any (discrete) random variables $U,V$ such that $V$ takes values from the set $\{0,1\}$,
\begin{align}
\| p_{U,V}-p_Up_V \|_1  \le 2 \times \max\limits_{u} \Big|p_{V|U}(1,u) - p_V(1)\Big|.
\end{align}
\end{lemma}

\begin{IEEEproof}
Note that we have
\begin{align}
\| p_{U,V}-&p_Up_V \|_1  = \sum_{u,v} \Big |p_{U,V}(u,v) - p_U(u)p_V(v)\Big | \\
&= \sum_{u,v} \Big |p_{U}(u)p_{V|U}(v,u) - p_U(u)p_V(v)\Big |\\
&\le  \sum_{u,v} p_{U}(u) \Big |p_{V|U}(v,u) - p_V(v) \Big |\\
&= \sum_{u} p_{U}(u) \Big |p_{V|U}(1,u) - p_V(1) \Big |\\
&+\sum_{u} p_{U}(u) \Big |p_{V|U}(0,u) - p_V(0) \Big |\\
&= \sum_{u} p_{U}(u) \Big |p_{V|U}(1,u) - p_V(1) \Big |\\
&+\sum_{u}  p_{U}(u) \Big |(1 -p_{V|U}(1,u)) - (1 -  p_V(1)) \Big |\\
&= 2 \times \sum_{u} p_{U}(u) \Big |p_{V|U}(1,u) - p_V(1) \Big |\\
&\le 2  \times \max\limits_{u} \Big |p_{V|U}(v,u) - p_V(v)\Big | \\
& \times \sum_{u} p_{U}(u)\\
&= 2 \times \max\limits_{u} \Big |p_{V|U}(1,u) - p_V(1) \Big |.
\end{align}
\end{IEEEproof}

 Concentration inequalities play an important role in the proofs of this paper.
Next we  state the Hoeffding's inequality.
\begin{lemma}\label{he} \cite[Theorem D.1]{fondml} 
(Hoeffding's inequality) 
Let $U_i$, $i \in [n]$, be $n$ i.i.d. random variables taking values in $[a,b]$.
Then, for any positive $\epsilon$, we have
\begin{align}
\pbb \Big ( \Big |\frac{U-\mathbb{E} [U]}{n} \Big | \ge \epsilon \Big ) \le 2  \exp \Big ( -2n\epsilon^2/(b-a)^2 \Big ),
\end{align}
where $U = \sum_{i=1}^n U_i$.
\end{lemma}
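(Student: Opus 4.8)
The plan is to establish the tail bound through the classical Chernoff exponential-moment method, which reduces the problem to controlling the moment generating function of each bounded summand. First I would center the variables: set $V_i := U_i - \mathbb{E}[U_i]$, so that each $V_i$ has zero mean and takes values in an interval of length $b-a$. The starting observation is that for any $s>0$, Markov's inequality applied to the increasing map $x \mapsto e^{sx}$, together with the independence of the $U_i$, gives
\begin{align}
\pbb\Big( \sum_{i=1}^n V_i \ge n\epsilon \Big) \le e^{-sn\epsilon}\, \mathbb{E}\Big[ e^{s\sum_{i=1}^n V_i} \Big] = e^{-sn\epsilon} \prod_{i=1}^n \mathbb{E}\big[ e^{sV_i} \big].
\end{align}

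The core of the argument is a bound on each factor $\mathbb{E}[e^{sV_i}]$, namely Hoeffding's lemma: if $V$ is a zero-mean random variable supported on $[c,d]$, then $\mathbb{E}[e^{sV}] \le e^{s^2(d-c)^2/8}$. I would prove this by exploiting the convexity of $x \mapsto e^{sx}$: for $x \in [c,d]$ one may interpolate $e^{sx} \le \frac{d-x}{d-c}e^{sc} + \frac{x-c}{d-c}e^{sd}$, and taking expectations while using $\mathbb{E}[V]=0$ kills the linear term, leaving a bound of the form $e^{\phi(s)}$ for an explicit $\phi$. One then checks $\phi(0)=\phi'(0)=0$ and that $\phi''(s)$ equals the variance of a two-point distribution supported on $[c,d]$, hence $\phi''(s)\le (d-c)^2/4$; a second-order Taylor expansion gives $\phi(s)\le s^2(d-c)^2/8$.

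Substituting this estimate with $d-c = b-a$ into the product bound yields
\begin{align}
\pbb\Big( \sum_{i=1}^n V_i \ge n\epsilon \Big) \le \exp\Big( -sn\epsilon + \frac{n s^2 (b-a)^2}{8} \Big).
\end{align}
The final step is to optimize the free parameter $s>0$; the quadratic exponent is minimized at $s = 4\epsilon/(b-a)^2$, which produces the exponent $-2n\epsilon^2/(b-a)^2$:
\begin{align}
\pbb\Big( \sum_{i=1}^n V_i \ge n\epsilon \Big) \le \exp\Big( -\frac{2n\epsilon^2}{(b-a)^2} \Big).
\end{align}
The identical estimate holds for the lower tail $\pbb(\sum_i V_i \le -n\epsilon)$ upon repeating the argument with $-V_i$ in place of $V_i$, and a union bound over the two one-sided tails supplies the factor of $2$. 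Since $\sum_i V_i = U - \mathbb{E}[U]$ and $|\,(U-\mathbb{E}[U])/n\,|\ge \epsilon$ is equivalent to $|U-\mathbb{E}[U]|\ge n\epsilon$, this is exactly the claimed inequality.

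I expect the main obstacle to be the proof of Hoeffding's lemma, and in particular the uniform bound $\phi''(s) \le (b-a)^2/4$: the point is to recognize the ratio appearing in $\phi''(s)$ as the variance of a Bernoulli-type distribution on $[c,d]$ and then invoke the elementary fact (Popoviciu's inequality) that such a variance never exceeds a quarter of the squared range. Everything else---the exponential Markov step, the factorization by independence, and the single-variable minimization over $s$---is routine.
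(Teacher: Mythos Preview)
Your proposal is correct and is the standard Chernoff/Hoeffding-lemma argument. The paper does not give its own proof of this lemma; it simply quotes the result from \cite[Theorem~D.1]{fondml}, so there is nothing to compare against.
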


In what follows, we propose a lemma about the approximation of probabilities.  

\begin{lemma}\label{app}
Consider $n$ random variables $U_i$, $i \in [n]$,  which are distributed according a probability measure $p_{U_{1:n}}$, each over a finite set $\mathcal{U}$.
Also consider a discrete random variable $W$ which  takes values from a finite set $\mathcal{W}$. 
Let  $w^* := \argmax\limits_{w \in \mathcal{W}}  \pbb(W= w)$. 
Consider $n$ random variables $V_i$, $i \in [n]$, each takes values from  a finite set $\mathcal{V}=\mathcal{U}$,  such that
$
p_{V_{1:n}} = p_{U_{1:n}|W=w^*}\label{eq1}.
$
All in all, the probability low governing the above random variables factors as
\begin{align}
p_{U_{1:n},V_{1:n},W}=p_{U_{1:n}} p_{V_{1:n}} p_{W|U_{1:n}}.
\end{align}
In addition, consider an arbitrary deterministic function $\psi : \mathcal{U}^n  \to \{0,1\}$, and define the   events
$
\mathcal{E}_1 := \Big \{ \psi(U_{1:n}) = 1 \Big \},
$
 and
$
\mathcal{E}_2  := \Big \{ \psi(V_{1:n}) = 1\Big \}.
$
Then, 
\begin{align}
\pbb(\mathcal{E}_2) \le  |\mathcal{W}| \times \pbb(\mathcal{E}_1).
\end{align}
\end{lemma}

\begin{IEEEproof}
We write
\begin{align}
\pbb(\mathcal{E}_1) &= \mathbb{E} \Big [ \psi(U_{1:n}) \Big ] \\
& \overset{(a)}{=}   \mathbb{E}_W \mathbb{E} \Big [\psi(U_{1:n}) \Big |W \Big] \\
& \ge \pbb(W = w^*) \times   \mathbb{E} \Big [\psi(U_{1:n}) \Big |W = w^* \Big] \\
&= \pbb(W = w^*) \times   \mathbb{E}   \Big [ \psi(V_{1:n}) \Big ] \\
& \overset{(b)}{\ge} \frac{1}{|\mathcal{W}|} \times   \mathbb{E}   \Big [ \psi(V_{1:n})  \Big ]  \\
& = \frac{1}{|\mathcal{W}|} \times   \pbb(\mathcal{E}_2),
\end{align}
where (a) follows from the law of iterated expectation
and  (b) follows from the definition of $w^*$. 
\end{IEEEproof}

%Let us state the following lemma.

\begin{lemma}\label{lemma10}
For   given positive integers $n,m$ such that $n \ge m$,  define  
\begin{align}
\mathcal{Q}_m :=\Big \{ (v_1,v_2,\ldots,v_n) \in \{0,1\}^n :  \sum_{i=1}^n v_i =m \Big \}.
\end{align}
Consider $n$  binary random variables $V_i \in \{0,1\}$, $i \in [n]$,  which  are distributed as  
 \begin{align}
p_{V_{1:n}}(v_{1:n}) = \frac{1}{|\mathcal{Q}_m|} \times  \mathbbm{1} \{v_{1:n} \in \mathcal{Q}_m \}.
\end{align}
Then, for any positive $\epsilon$ and any (non-empty) $\mathcal{T} \subseteq [n]$, we have
\begin{align}
\pbb \Big ( \Big |\frac{V-\mathbb{E} [V]}{|\mathcal{T}|} \Big | \ge \epsilon \Big ) \le 2  (n+1)  \exp \Big ( -2|\mathcal{T}|\epsilon^2 \Big ),
\end{align}
where $V := \sum_{i\in \mathcal{T}} V_i$.
\end{lemma}

\begin{IEEEproof}
The proof is based on   Lemma \ref{he} and Lemma \ref{app}. 
Consider $n$ i.i.d. binary random variables $U_i \in \{0,1\}$, $i \in [n]$, such that $p:=\pbb(U_i = 1) =  m/n$.
Let us define a random variable $W  := \sum_{i \in [n]} U_i$.
Note that $W$ takes values from the set $\mathcal{W} = \{0,1,\ldots,n\}$ and it is distributed according to a binomial distribution  with parameters $n,p$. 
Note that 
\begin{align}
w^* &= \argmax\limits_{w \in \{0,1,\ldots, n \}}  \pbb(W= w) \\
& = \argmax_{w \in \{0,1,\ldots, n \}} { n \choose w}\times  p^{w} \times (1-p)^{n-w} \\
&\overset{(a)}{=} m, 
\end{align} 
where (a) follows since $m=np$ is   the mode of $W$. 

Now observe that for any $v_{1:n} \in \{0,1\}^n$ we have
\begin{align}
p_{U_{1:n}|W = m}&(v_{1:n}) \\
&=
\frac{
\pbb(W= m|U_{1:n}= v_{1:n}) \times \pbb( U_{1:n}=v_{1:n})
} {\pbb(W=m)}\\
& = \frac{
\mathbbm{1} \{v_{1:n} \in \mathcal{Q}_m \} \times p^{m} \times (1-p)^{n-m}
}{{n \choose m}\times  p^{m} \times (1-p)^{n-m}}\\
& \overset{(a)}{=} \frac{1}{{n \choose m}} \times \mathbbm{1} \{v_{1:n} \in \mathcal{Q}_m \} \\
& = p_{V_{1:n}}(v_{1:n}),
\end{align}
where (a) follows from the fact that $|\mathcal{Q}_m| = {n \choose m}$.
Let us define a  function $\psi:\{0,1\}^n \to \{0,1\}$ as follows.
\begin{align}
\psi (x_1,x_2,\ldots,x_n) = \mathbbm{1} \Big  \{
\Big |\frac{\sum_{i \in \mathcal{T}} x_i}{|\mathcal{T}|} - p  \Big | \ge \epsilon 
 \Big \}.
\end{align}
Now observe that the random variables $U_{1:n}$, $V_{1:n}$ and $W$ and the function $\psi(.)$ satisfy the required conditions of Lemma \ref{app}. Hence, if we define $U := \sum_{i\in \mathcal{T}} U_i$, we conclude that
\begin{align}
\pbb \Big ( \Big |\frac{V-\mathbb{E} [V]}{|\mathcal{T}|} \Big | \ge \epsilon \Big ) 
&\le |\mathcal{W}| \times \pbb \Big ( \Big |\frac{U-\mathbb{E} [U]}{|\mathcal{T}|} \Big | \ge \epsilon \Big ) \\
& \overset{(a)}{\le}  |\mathcal{W}| \times 2  \exp \Big ( -2|\mathcal{T}|\epsilon^2 \Big ) \\
& \overset{(b)}{=} 2  (n+1)  \exp \Big ( -2|\mathcal{T}|\epsilon^2 \Big ),
\end{align}
where (a) follows from Lemma \ref{he} and (b) follows the fact that $ \mathcal{W}  = \{0,1,\ldots,n\}$. We are done. 
\end{IEEEproof}

The following definitions are about the intersection of the subsequences.  
 \begin{definition}
 For any $\bold{s},\bold{t} \in \mathcal{S}_{L,G}$,  define 
  \begin{align}
\ints (\bold{s}, \bold{t}) := (w_1,w_2, \ldots, w_k )\in [G]^k
\end{align}• such that 
 \begin{itemize}
\item $\{w_1,w_2, \ldots, w_k\} = \{s_{\ell}:  \ell \in [L] \} \bigcap  \{t_{\ell}:  \ell \in [L] \}$,
\item $w_1 < w_2 < \ldots < w_k$.
\end{itemize}
  \end{definition}

  %In what follows, we define a set including  the  subsequences of a  sequence with some properties.
  \begin{definition}
  For any $\bold{s} \in \mathcal{S}_{L,G}$, define
  \begin{align}
\mathcal{I}_{\epsilon} (\bold{s}) := \Big \{  \ints (\bold{s}, \bold{t}) ~ \Big | ~   \bold{t} \in \mathcal{S}_{L,G} ,\dist(\bold{s},\bold{t}) \ge L\epsilon \Big \}.
\end{align}
  \end{definition}
  Note that we have $|\mathcal{I}_{\epsilon} (\bold{s}) | \le 2^L$. 
  
\begin{lemma}\label{lemmatau}
For any $\bold{s},\bold{t} \in \mathcal{S}_{L,G}$,
the following statements are equivalent.
\begin{itemize}
\item $\dist(\bold{s}, \bold{t}) \ge L \epsilon$,
\item $\len(\ints (\bold{s}, \bold{t})) \le L(1-\epsilon/2)$.
\end{itemize}  
\end{lemma}

\begin{IEEEproof}
Let
$\mathcal{S} : = \{s_1,s_2,\ldots,s_L\}$ and $\mathcal{T} : = \{t_1,t_2,\ldots,t_L\}$. Note that
\begin{align}
\dist(\bold{s}, \bold{t}) \ge L \epsilon & \Leftrightarrow 
 |\mathcal{S} \cup \mathcal{T}| -  |\mathcal{S} \cap \mathcal{T}| \ge L\epsilon\\
 & \Leftrightarrow
 |\mathcal{S}| + |\mathcal{T}| - 2|\mathcal{S} \cap \mathcal{T}| \ge L \epsilon \\
  & \Leftrightarrow
 2L- 2|\mathcal{S} \cap \mathcal{T}| \ge L \epsilon \\
   & \Leftrightarrow
 |\mathcal{S} \cap \mathcal{T}| \le L(1- \epsilon/2) \\
    & \Leftrightarrow
 \len(\ints (\bold{s}, \bold{t})) \le L(1-\epsilon/2).
\end{align}
\end{IEEEproof}

\subsection{
 Preliminaries for Theorem \ref{lemma2}
}

In this subsection, we present a few lemmas and definitions related to the proof of Theorem  \ref{lemma2}.

 \begin{definition}
 For any  $g(.) \in \mathcal{F}_{L,m}$ and any  $\bold{t} \in \mathcal{S}_{L,G}$  define a random variable $J_{\bold{t},g} := g(\bold{X}_{\bold{t}})$.  Here $\bold{X}$ is a random sequence distributed uniformly over $\mathcal{X}^G$.
 \end{definition}

We note that the probability distribution of $J_{\bold{t},g}$ for any $g(.)  \in \mathcal{F}_{L,m}$ and $\bold{t}$ is as follows.
\begin{equation}
    p_{J_{\bold{t},g}}(u)= \begin{cases}
               \gamma               & u = 1\\
               1-\gamma            & u = 0,
           \end{cases}
\end{equation}
where $\gamma :=m/|\mathcal{X}|^L$ is a parameter. 
Also, for any function $F(.)$ which is chosen randomly and uniformly from the set $\mathcal{F}_{L,m}$ and any sequence $\bold{S}$ which is chosen randomly and uniformly from the set $\mathcal{S}_{L,G}$, we have the following identity.
\begin{align}
p_{J_{\bold{t},g}} = p_{J_{\bold{S},g}} = p_{J_{\bold{t},F}} = p_{J_{\bold{S},F}}.
\end{align}

We note that throughout this section, we have fixed an arbitrary $\bold{s}\in \mathcal{S}_{L,G}$ (see   Theorem \ref{lemma2} again for more information). 

In the following lemma, we aim to show that with probability tending one,  $J_{\bold{t},g}$ and $Y$ are (approximately) independent, if $\dist(\bold{s},\bold{t}) \ge L\epsilon$. 
In other words, we want to show that  $J_{\bold{t},g}~\overset{{\mu}}{\bot}~ Y$ for some ${\mu} \to 1$.
Let us clarify this statement in the following lemma.

 \begin{lemma}\label{lemma11}
 For any  $g(.) \in \mathcal{F}_{L,m}$,  any  $\bold{t} \in \mathcal{S}_{L,G}$ and also any $\mu \in (1, \infty)$ define the following event\footnote{
  Note that the  joint   distribution 
  $P_{Y,J_{\bold{t},g}}$ depends on the random function $F(.)$. Therefore,  it is a random pmf and we denote it  by capital letters. 
  As a reminder,  $F(.)$ is uniformly distributed over $\mathcal{F}_{L,m}$.
    }
  \begin{align}
\mathcal{E}^{\mu}_{\bold{t},g}:= \Big \{
J_{\bold{t},g}~\overset{{\mu}}{\bot}~ Y
 \Big \}.
\end{align}
Note that $Y = J_{\bold{s},F} \oplus Z$, where $Z$ is the additive noise in the model (see  Fig. 2). 
Let us define 
\begin{align}
\mathcal{E}^{\mu} := \bigcap_{
\substack{
            \bold{t} \in \mathcal{S}_{L,G}\\
             \dist(\bold{s},\bold{t}) \ge L \epsilon \\
             g(.) \in \mathcal{F}_{L,m}}
} \mathcal{E}^{\mu}_{ \bold{t} , g}
\end{align}
Then, $\pbb (\mathcal{E}^{\mu}) \to 1$ for any $\mu \in (1,\infty).$
 \end{lemma}

\begin{IEEEproof}
Let  us first define   
\begin{align}
\kappa := \frac{1}{2} (\mu - 1) \times \min (\beta,1-\beta) \times \min (\gamma, 1-\gamma ).
\end{align}
Assume that $\bold{X}$ is chosen randomly and uniformly from the set $\mathcal{X}^L$.
Let us define the  event\footnote{
Note that the random choice of the function $F(.)$ does  not make any difference in the  distribution of $J_{\bold{s},f}$, i.e.,  
$P_{J_{\bold{s},F}} = p_{J_{\bold{s},f}}$.
}
\begin{align}
\mathcal{E}^{\kappa}_{\bold{w},\bold{x}'} := \Big \{ \Big | P_{J_{\bold{s},F}|\bold{X}_{\bold{w}} = \bold{x}'} (1) -    P_{J_{\bold{s},F}}(1)\Big | \le  \kappa \Big   \},
\end{align}
for any $\bold{w} \in \mathcal{I}_{\epsilon}(\bold{s})$ and any $\bold{x}'  \in \mathcal{X}^{\len(\bold{w})}$.
Now we need the following two lemmas.
 \begin{lemma}\normalfont \label{appendlemma1}
\begin{align}
\mathcal{E}_{\text{int}}^{\kappa}:&=
\bigcap_{
\substack{
            \bold{t} \in \mathcal{S}_{L,G}\\
             \dist(\bold{s},\bold{t}) \ge L \epsilon \\
             \bold{x}'  \in \mathcal{X}^{\len(\ints (\bold{s}, \bold{t}))} }
}  \mathcal{E}^{\kappa}_{\ints (\bold{s}, \bold{t}),\bold{x}'} \\&
 \subseteq 
\mathcal{E}^{\mu} = \bigcap_{
\substack{
            \bold{t} \in \mathcal{S}_{L,G}\\
             \dist(\bold{s},\bold{t}) \ge L \epsilon \\
             g(.) \in \mathcal{F}_{L,m}}
} \mathcal{E}^{\mu}_{ \bold{t} , g} \label{intercup}
\end{align}
\end{lemma}
\begin{IEEEproof}
See appendix \ref{appendlemma}
\end{IEEEproof}
 \begin{lemma}\label{lemma13}
For any $\bold{t} \in \mathcal{S}_{L,G}$, such that $\dist(\bold{s},\bold{t}) \ge L\epsilon$,  any $g(.) \in \mathcal{F}_{L,m}$, and any $\bold{x}' \in \mathcal{X}^{\len(\ints (\bold{s}, \bold{t}))} $ we have
\begin{align}
\pbb( \overline{\mathcal{E}^{\kappa}}_{\ints (\bold{s}, \bold{t}),\bold{x}'})   \le
2   (|\mathcal{X}|^L+1)  \exp \Big ( -2
|\mathcal{X}|^{\frac{1}{2} L \epsilon}\kappa^2 \Big ).
\end{align}
\end{lemma}
\begin{IEEEproof}
See appendix \ref{appendlemma2}.
\end{IEEEproof}
Now using the  union  bound, we have
\begin{align}
\pbb (\mathcal{E}^{\mu}) & \overset{(a)}{\ge} \pbb(\mathcal{E}_{\text{int}}^{\kappa}) \\
&= 1 - 
\pbb(\overline{\mathcal{E}_{\text{int}}^{\kappa}}) \\
& = 1 -  \pbb \Big (\bigcup_{
\substack{
            \bold{t} \in \mathcal{S}_{L,G}\\
             \dist(\bold{s},\bold{t}) \ge L \epsilon \\
             \bold{x}' \in \mathcal{X}^{\len(\ints (\bold{s}, \bold{t}))} }
}  \overline{\mathcal{E}^{\kappa}}_{\ints (\bold{s}, \bold{t}),\bold{x}'} \Big) \\
 &= 1 -  \pbb \Big (\bigcup_{
\substack{
             \bold{w} \in \mathcal{I}_{\epsilon}(\bold{s})\\
%             \dist(\bold{s},\bold{t}) \ge L \epsilon \\
             \bold{x}' \in \mathcal{X}^{\len(\bold{w})} }
}  \overline{\mathcal{E}^{\kappa}}_{\bold{w},\bold{x}'} \Big) \\
&\ge 1 -   \sum_{
\substack{
			\bold{w} \in \mathcal{I}_{\epsilon}(\bold{s})\\
%            \bold{t} \in \mathcal{S}_{L,G}\\
%             \dist(\bold{s},\bold{t}) \ge L \epsilon \\
             \bold{x}' \in \mathcal{X}^{\len(\bold{w})} }
}  \pbb (  \overline{\mathcal{E}^{\kappa}}_{\bold{w},\bold{x}'} ) \\
&\overset{(b)}{\ge} 1 -   \sum_{
\substack{
			\bold{w} \in \mathcal{I}_{\epsilon}(\bold{s})\\
%            \bold{t} \in \mathcal{S}_{L,G}\\
%             \dist(\bold{s},\bold{t}) \ge L \epsilon \\
             \bold{x}' \in \mathcal{X}^{\len(\bold{w})} }
} 2   (|\mathcal{X}|^L+1)   \exp \Big ( -2
|\mathcal{X}|^{\frac{1}{2} L \epsilon}\kappa^2 \Big  ) \\
& \overset{(c)}{\ge}  1 -   
  %\Big ( \sum_{\ell = 1}^L 
  |\mathcal{X}|^{L}
  %\Big )   \times
  2^{L+1}   
   (|\mathcal{X}|^L+1)   \exp \Big ( -2
|\mathcal{X}|^{\frac{1}{2} L \epsilon}\kappa^2 \Big  ) \\
% & \ge  1 -   
 % L  |\mathcal{X}|^L \times 
  %2^{L+1}  
 % (|\mathcal{X}|^L+1)   \exp \Big ( -2
%|\mathcal{X}|^{\frac{1}{2} L \epsilon}\kappa^2 \Big  )\\
& = 1- o(1),
\end{align}
where (a) follows from Lemma \ref{appendlemma1},  (b) follows from Lemma \ref{lemma13}, and (c) follows from the fact that  $|\mathcal{I}_{\epsilon} (\bold{s}) | \le 2^L$.
The proof is thus complete.

\subsection{
Proof of Theorem \ref{lemma2}
}
Now we are ready to prove Theorem \ref{lemma2}.  
Fix an arbitrary $\bold{s} \in \mathcal{S}_{L,G}$ throughout the proof. 
Note that 
\begin{align}
 {I}(F(\bold{X}_{\bold{s}});Y)  & = {I}(J_{\bold{s},F};Y) = H(Y) - H(Y | F(\bold{X}_{\bold{s}}) ) \\
 &= h(\beta) - h(\alpha) .
\end{align}

Based on the joint AEP theorem \cite[Theorem 7.6.1]{cover}, if   $f(\bold{X}_{\bold{s}})$ is independent of  $g(\bold{X}_{\bold{t}})$, for any $f(.),g(.) \in \mathcal{F}_{L,m}$ and  any $\bold{t} \in \mathcal{S}_{L,G}$ with the normalized distance of at least $\epsilon$ from $\bold{s}$, then the desired result is established.
 However, in the theorem, this condition does not hold.
 In particular, if two sequences $\bold{s},\bold{t}$ intersect, then the independence  may not hold.
 This  means that we cannot immediately use the AEP theorem   for the proof. 
However, we showed that if we choose the function $F(.) \in \mathcal{F}_{L,m}$ uniformly at random, then for  sequences like $\bold{t}$ that have at least a normalized distance of $\epsilon$ from $\bold{s}$, the independence condition holds approximately. 
We proved this statement in Lemma \ref{lemma11}.

Now we  use similar steps to \cite[Theorem 7.6.1]{cover}  to prove the theorem. 
Fix an arbitrary $\mu \in (1,\infty)$.
Consider the event $\mathcal{E}^{\mu}$
which is defined in Lemma \ref{lemma11}.
  Fix  a sequence $\bold{t} \in \mathcal{S}_{L,G}$ such that $\dist(\bold{s},\bold{t}) > L\epsilon$, and  a function $g(.) \in \mathcal{F}_{L,m}$. Let  $U_n := g(\bold{X}_{n,\bold{t}})
  $
  for any $n \in [N]$. 
  
   Note that  
   \begin{align}
\pbb(&\mathcal{E}_{\bold{t},g}|\mathcal{E}^{\mu}) \\& = \pbb \Big \{ \Big ( 
\big (U_n)_{n \in [N]},(Y_n)_{n \in [N]}\Big ) \in \mathcal{T}^N_{\tau}
\Big |
 \mathcal{E}^{\mu},\bold{S} = \bold{s}
 \Big \}  \\
%\pbb( ((g(\bold{X}_{\bold{t}}))^N,Y^N) \in \mathcal{T}_{\tau}^N ) 
& = \sum_{(u^N,y^N) \in \mathcal{T}_{\tau}^N}  p_{U^N,Y^N|\mathcal{E}^{\mu},\bold{S} = \bold{s}} (u^N,y^N).
\end{align}
Using Lemma \ref{lemma11}, we conclude that
\begin{align}
&\pbb(\mathcal{E}_{\bold{t},g} | \mathcal{E}^{\mu})  =  \sum_{(u^N,y^N) \in \mathcal{T}_{\tau}^N}  p_{U^N,Y^N|\mathcal{E}^{\mu},\bold{S} = \bold{s}} (u^N,y^N)\\
&  
  \le  \mu^N  \times \sum_{(u^N,y^N) \in \mathcal{T}_{\tau}^N} p_{U^N|\mathcal{E}^{\mu},\bold{S} = \bold{s}}(u^N) \times  p_{{Y}^N|\mathcal{E}^{\mu},\bold{S} = \bold{s}}(y^N) 
\\
& \le  \mu^N \times  \sum_{(u^N,y^N) \in \mathcal{T}_{\tau}^N} 2^{-NH(g(\bold{X}_\bold{t})) - NH(Y)+2N\tau }\\
& \le  \mu^N \times  2^{NH(g(\bold{X}_\bold{t}),Y)-NH(g(\bold{X}_\bold{t})) - NH(Y)+3N\tau }\\
&=  2^{-N(h(\beta) - h(\alpha)-\log(\mu)-3\tau) }
\end{align}
which completes the proof, if $\mu \in (1,\infty)$ is  small enough.

\section{Proof of Lemma \ref{lemm1}}\label{prooflemm3}

Let us first state some preliminaries. Define $\delta(\epsilon) :
= \sup\limits_{x \in (0,1/2)} \frac{h(2\epsilon x)}{h(x)}.$
It can be shown that  $\delta(\epsilon)=h(\epsilon)$ for any $\epsilon \in (0,1/2).$ This follows from the following lemma.

\begin{lemma}\label{lemma4}
For any $x,y \in [0,1/2]$ we have
\begin{align}
h(2xy) \le h(x) h(y),
\end{align}
where $h(.)$ is the binary entropy function.
\end{lemma}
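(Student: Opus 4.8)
The plan is to freeze one of the two variables, say $y \in [0,1/2]$, and study the single-variable function
\begin{align}
\phi(x) := h(x)\,h(y) - h(2xy), \qquad x \in [0,1/2].
\end{align}
The asserted inequality $h(2xy) \le h(x)h(y)$ is exactly $\phi(x) \ge 0$. First I would dispose of the degenerate case $y=0$, where $\phi \equiv 0$, and then assume $y \in (0,1/2]$. Next I would check the boundary: since $h(0)=0$ and $h(1/2)=1$, we get $\phi(0)=0$ and, using $2\cdot\tfrac12\cdot y = y$, also $\phi(1/2)= h(1/2)h(y)-h(y)=0$. So $\phi$ vanishes at both endpoints, and it suffices to prove that $\phi$ is concave on $(0,1/2)$, because a concave function lies above the chord joining its endpoint values, which here is the zero function.

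To verify concavity I would compute the second derivative. Recalling that $h''(p) = -\tfrac{1}{(\ln 2)\,p(1-p)} < 0$ on $(0,1)$, the chain rule gives
\begin{align}
\phi''(x) = h(y)\,h''(x) - 4y^2\,h''(2xy) = \frac{1}{(\ln 2)\,x}\left( \frac{2y}{1-2xy} - \frac{h(y)}{1-x}\right).
\end{align}
For $x \in (0,1/2)$ the prefactor $\tfrac{1}{(\ln 2)x}$ is positive, so $\phi''(x)\le 0$ is equivalent to $2y(1-x) \le h(y)(1-2xy)$, which after expanding and collecting the $x$-terms reads
\begin{align}
2y - h(y) \le 2xy\,(1 - h(y)).
\end{align}

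The fact that makes this elementary is the two-sided bound $2y \le h(y) \le 1$ valid on $[0,1/2]$: the upper bound is simply the range of $h$, while the lower bound follows from concavity of $h$ together with $h(0)=0$ and $h(1/2)=1$, so the graph of $h$ lies above the chord $p\mapsto 2p$. Granting these, the left-hand side $2y-h(y)$ is nonpositive and the right-hand side $2xy(1-h(y))$ is nonnegative, so the inequality holds; hence $\phi''\le 0$ and $\phi$ is concave on $(0,1/2)$. Combined with the vanishing endpoints, concavity forces $\phi \ge 0$, which is the claim. I expect the only genuine obstacle to be pinning down the sign of $\phi''$, and the reduction above shows that it collapses entirely to the single chord bound $h(y)\ge 2y$, so no delicate estimate is actually required.
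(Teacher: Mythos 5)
Your proposal is correct and follows essentially the same route as the paper's proof in Appendix D: fix $y$, observe that $f_y(x)=h(x)h(y)-h(2xy)$ vanishes at $x=0$ and $x=1/2$, and establish concavity by showing $f_y''\le 0$, which both arguments reduce to the chord bound $h(y)\ge 2y$ on $[0,1/2]$. The only cosmetic difference is in the last step — you clear denominators and split into the signs of $2y-h(y)$ and $2xy(1-h(y))$, while the paper bounds $\frac{2y}{1-2xy}\le\frac{2y}{1-x}$ directly — but these are equivalent manipulations of the same inequality.
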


\begin{IEEEproof}
See appendix \ref{appendD}.
\end{IEEEproof}

Using   Lemma \ref{lemma4}, we conclude that 
\begin{align}
h(\epsilon) = \frac{h(2\epsilon x)}{h(x)} \Big |_{x = 1/2} &\le \sup_{x \in (0,1/2)} \frac{h(2\epsilon x)}{h(x)}\\
& \le \sup_{x \in (0,1/2)} h(\epsilon) = h(\epsilon),
\end{align}
which shows that $\delta(\epsilon) = h(\epsilon)$. Next we  use the function $\delta(.)$ to achieve the desired result.

By the  assumption of the lemma, there is  a sequence of algorithms  $\{\mathcal{A}_{(G_i,L_i,N_i,{\alpha}, \beta ,m_i)}\}_{i\in \mathbb{N}}$  with rate $R$, such that we have $\lim_{i \rightarrow \infty}P^{\text{WC}}_{\epsilon}(\mathcal{A}_{G_i})=0.$ 
This implies that $\lim_{i \rightarrow \infty}P^{\text{AVG}}_{\epsilon}(\mathcal{A}_{G_i})=0.$ 
 For a fixed positive integer $i$, let $\bold{S}$ be a random sequence that  is distributed   uniformly over the set $\mathcal{S}_{L_i,G_i}$. 
 Also let $F(.)$ be a random function that is distributed uniformly   over the set $\mathcal{F}_{L_i,m_i}$. 
 There are $N_i$   samples  $\bold{X}^{N_i}$, which are sampled uniformly and independently from the set $|\mathcal{X}|^G$, and their corresponding labels $Y^{N_i}$, which are generated based on  the parameters $\alpha$ and $\beta$. We denote the output of the algorithm $\mathcal{A}_{(G_i,L_i,N_i,{\alpha}, \beta ,m_i)}$ to the dataset $(\bold{X}^{N_i},\bold{Y}^{N_i})$ by $\hat{\bold{S}}$.
Let us define the event $\mathcal{E}_i:=\mathcal{E}_{\mathcal{A}_{G_i},\epsilon} = \{\frac{\dist(\hat{\bold{S}},\bold{S})}{L_i}>\epsilon\}$ and also let $E_i:=\mathbbm{1}\{\mathcal{E}_i\}$. Note that $\pbb(\mathcal{E}_{i}) = P^{\text{AVG}}_{\epsilon}(\mathcal{A}_{G_i}).$ Hence, we have $\lim_{i  \rightarrow \infty}\pbb(\mathcal{E}_{i})=0$.

The proof consists of the following  steps. 
\begin{enumerate}[(i)]
\item   First we claim that \begin{align}
H(\bold{S}|\hat{\bold{S}})& \le 1+\pbb(\mathcal{E}_i)\log({ G_i \choose L_i})+\log ( L_i \epsilon+1) \\
&+ L_i h(\epsilon)+ (G_i-L_i)h(\frac{L_i \epsilon }{G_i-L_i}).
\end{align}

\item  Therefore,  
\begin{align}
\log(&{G_i \choose L_i})\\&=H(\bold{S})  \\
&=H(\bold{S}|\hat{\bold{S}})+{I}(\bold{S};\hat{\bold{S}})  \\ 
&\le  1+\pbb(\mathcal{E}_i)\log({ G_i \choose L_i})+\log ( L_i \epsilon+1)  \\
& + L_i h(\epsilon) + (G_i-L_i)h(\frac{L_i \epsilon }{G_i-L_i}) + {I}(\bold{S};\hat{\bold{S}})\label{10}.
\end{align}

\item 
The third step   is to show that 
\begin{align}
{I}(\bold{S};\hat{\bold{S}})\le N_i h(\beta)-N_i h(\alpha)\label{91}.
\end{align}

\item Combining the above arguments shows that
\begin{align}
\log(&{G_i \choose L_i})\\&\le 1+\pbb(\mathcal{E}_i)\log({ G_i \choose L_i})+\log (L_i \epsilon+1)\\
&+L_i h(\epsilon) +(G_i-L_i)h(\frac{\epsilon L_i}{G_i-L_i}) \\
&+N_i h(\beta)-N_i h(\alpha)\label{11}.
\end{align}

\item

Then, from \cite[Chapter  11, p.~353]{cover},  
\begin{align}
 \frac{1}{G_i+1}2^{G_i h(L_i/G_i)} \le {G_i \choose L_i} \le 2^{G_i h(L_i/G_i)}.
\end{align}
By taking the logarithm from the two sides,  we conclude
\begin{align}
 G_i h(L_i/G_i)-\log(G_i+1)&\le  \log({G_i \choose L_i})\\
 &\le G_i h(G_i/L_i)\label{12}.
\end{align}

\item 

Using (\ref{12}) and (\ref{11}), we have
\begin{align}
&G_i h(L_i/G_i) - \log(G_i+1) \\
&\le 1+\pbb(\mathcal{E}_i) G_i h(L_i/G_i)+\log (L_i \epsilon+1) +L_i h(\epsilon) \\
&+(G_i-L_i)h(\frac{\epsilon L_i}{G_i-L_i})+N_i h(\beta)-N_i h(\alpha)\label{13}.
\end{align}
Dividing two sides of  (\ref{13}) by $G_i h(L_i/G_i)$ results
\begin{align}
 &1 -\frac{ \log(G_i+1)}{G_i h(L_i/G_i)}\\ &\le \frac{1}{G_i h(L_i/G_i)}+\pbb(\mathcal{E}_i) +\frac{\log (L_i\epsilon+1)}{G_i h(L_i/G_i)} \\
 &+\frac{L_i }{G_i h(L_i/G_i)} h(\epsilon)+\frac{(G_i-L_i)}{G_i h(L_i/G_i)}h(\frac{\epsilon L_i}{G_i-L_i})\\&+\frac{h(\beta)-h(\alpha)}{R}.
\end{align}
Using the concavity of the function $h(.)$, we have that 
\begin{align}
\frac{L_i }{G_i} h(\epsilon)+\frac{(G_i-L_i)}{G_i }h(\frac{\epsilon L_i}{G_i-L_i}) \le h (2\epsilon L_i/G_i).
\end{align}
Hence, we conclude that
\begin{align}
 1 -\frac{ \log(G_i+1)}{G_i h(L_i/G_i)}&\le \frac{1}{G_i h(L_i/G_i)}+ \pbb(\mathcal{E}_i)\\& +\frac{\log (L_i\epsilon+1)}{G_i h(L_i/G_i)} +\frac{h(2\epsilon L_i/G_i)}{h(L_i/G_i)} \\
 &+\frac{h(\beta)-h(\alpha)}{R} .
\end{align}
Applying the inequality  $G_i h(L_i/G_i) \ge L_i \log(G_i/L_i)$ shows that  
\begin{align}
 1 -\frac{ \log(G_i+1)}{L_i \log(G_i/L_i)}&\le \frac{1}{G_i h(L_i/G_i)}+\pbb(\mathcal{E}_i) \\
 &+\frac{\log (L_i \epsilon+1)}{L_i \log(G_i/L_i)}+\frac{h(2\epsilon L_i/G_i)}{h(L_i/G_i)} \\
 &+\frac{h(\beta)-h(\alpha)}{R} \\
 &\overset{(a)}{\le} \frac{1}{G_i h(L_i/G_i)}+\pbb(\mathcal{E}_i) \\
 & +\frac{\log (L_i \epsilon+1)}{L_i \log(G_i/L_i)}  \\
 &+\delta (\epsilon)+\frac{h(\beta)-h(\alpha)}{R} \label{14},
\end{align}  where $(a)$ follows by the definition of the function $\delta(.)$.

\item  Finally, we claim that at the limit of $i \rightarrow \infty$, two terms $\frac{ \log(G_i+1)}{L_i \log(G_i/L_i)}$ and $\frac{\log (L_i \epsilon+1)}{L_i \log(G_i/L_i)}$  go to zero. By letting  $i \rightarrow \infty$ in (\ref{14}),  we conclude that
\begin{align}
1 \le \delta (\epsilon)  + \frac{h(\beta)-h(\alpha)}{R},
\end{align}
or 
\begin{align}
R \le \frac{h(\beta)-h(\alpha)}{1-\delta(\epsilon)},
\end{align}
which completes the  proof of the lemma.

\end{enumerate}

In what follows, we prove the above stated claims. In particular, the claims in steps (i),(ii) and (vii) need to be proved.

\subsection{Proof of (i)}

Note that we have
\begin{align}
H(E_i,\bold{S}|\hat{\bold{S}})&=H(\bold{S}|\hat{\bold{S}})+H(E_i|\bold{S},\hat{\bold{S}})   \label{7}\\
&=H(E_i|\hat{\bold{S}})+H(\bold{S}|\hat{\bold{S}},E_i) .
\end{align}
Note that $H(E_i|\bold{S},\hat{\bold{S}})=0$ and  $H(E_i|\hat{\bold{S}})\le 1$. Therefore, 
\begin{align}
&H(\bold{S}|\hat{\bold{S}}) \\&\le 1+ H(\bold{S}|\hat{\bold{S}},E_i)\\
&= 1+\pbb(\mathcal{E}_i)H(\bold{S}|\hat{\bold{S}},E_i=1) + (1-\pbb(\mathcal{E}_i))H(\bold{S}|\hat{\bold{S}},E_i=0)\\
&\le   1+\pbb(\mathcal{E}_i)H(\bold{S}|\hat{\bold{S}},E_i=1) +H(\bold{S}|\hat{\bold{S}},E_i=0)\\
	&\overset{(a)}{\le}   1+\pbb(\mathcal{E}_i)H(\bold{S}) +H(\bold{S}|\hat{\bold{S}},E_i=0)\\
&=  1+\pbb(\mathcal{E}_i)\log({ G_i \choose L_i}) +H(\bold{S}|\hat{\bold{S}},E_i=0)\label{9},
\end{align}
where (a) follows from the fact that conditioning reduces the entropy. Note that
\begin{align}
&H(\bold{S}|\hat{\bold{S}},E_i=0)\\&\le \max_{\hat{\bold{s}}\in \mathcal{S}_{L_i,G_i}}\log(|\{ \bold{t} \in \mathcal{S}_{L_i,G_i}: \frac{\dist( \hat{\bold{s}},\bold{t})}{L_i}\le \epsilon\}|)\\
&= \max_{\hat{\bold{s}}\in \mathcal{S}_{L_i,G_i}}  \log(\sum_{\ell=0}^{\lfloor L_i \epsilon \rfloor}|\{  \bold{t}  \in \mathcal{S}_{L_i,G_i}: \dist(\hat{\bold{s}}, \bold{t})=\ell \}|) \\
&=\max_{\hat{\bold{s}}\in \mathcal{S}_{L_i,G_i}}  \log(\sum_{\ell=0}^{\lfloor L_i \epsilon \rfloor}{ L_i \choose \ell}{G_i-L_i \choose \ell})\\
&\overset{(a)}{\le}  \log( (L_i \epsilon+1){ L_i \choose \lfloor L_i\epsilon \rfloor}{G_i-L_i\choose \lfloor  L_i  \epsilon \rfloor}),\label{81}
\end{align}
where (a) follows from the fact that $\epsilon \in (0, 1/2)$. 
Using (\ref{12}) and (\ref{81}) we conclude
\begin{align}
&H(\bold{S}|\hat{\bold{S}},E_i=0) \\&\le  \log( (L_i \epsilon + 1){ L_i  \choose \lfloor L_i\epsilon \rfloor}{G_i-L_i \choose \lfloor  L_i \epsilon \rfloor})\\
&  \le  \log( (L_i \epsilon +1) \times 2 ^{ L_i h(\epsilon)}\times  2^{(G_i-L_i)h(\frac{L_i \epsilon }{G_i-L_i})})\\
&= \log ( L_i \epsilon+1)+ L_i h(\epsilon)+ (G_i-L_i)h(\frac{L_i \epsilon }{G_i-L_i})\label{8},
\end{align}
Combining (\ref{9}) and (\ref{8}) results
\begin{align}
H(\bold{S}|\hat{\bold{S}})&\le 1+\pbb(\mathcal{E}_i)\log({ G_i \choose L_i})+\log ( L_i \epsilon+1)+ L_i h(\epsilon)\\&+ (G_i-L_i)h(\frac{L_i \epsilon }{G_i-L_i}),
\end{align}
which completes the proof.

\subsection{Proof of (iii)}
Note that \begin{align}
{I}(\bold{S};\hat{\bold{S}})& \overset{(a)}{\le} {I}(\bold{S};\bold{X}^{N_i},Y^{N_i})\\
&=  {I}(\bold{S};\bold{X}^{N_i})+{I}(\bold{S};Y^{N_i}|\bold{X}^{N_i})\\
&\overset{(b)}{=} {I}(\bold{S};Y^{N_i}|\bold{X}^{N_i})\\
&=H(Y^{N_i}|\bold{X}^{N_i})-H(Y^{N_i}|\bold{X}^{N_i},\bold{S})\\
&\overset{(c)}{\le} N_i  h(\beta)-H(Y^{N_i}|\bold{X}^{N_i},\bold{S})\label{2},
\end{align}
where  (a) follows from  the data processing inequality and the fact that $\bold{S} \rightarrow (\bold{X}^{N_i},Y^{N_i}) \rightarrow \hat{\bold{S}}$ is a Markov Chain,  (b) follows from the fact that $\bold{S}$ and $\bold{X}^{N_i}$ are independent random variables and thus ${I}(\bold{S};\bold{X}^{N_i})=0$, and (c) follows from the fact that  
\begin{align}
 H(Y^{N_i}|\bold{X}^{N_i})\le H(Y^{N_i})= N_i H(Y) &= N_i h(\pbb(Y=1))\\&=N_i h(\beta).
\end{align} 
We write
\begin{align}
H(Y^{N_i}|\bold{X}^{N_i},\bold{S})&\overset{(a)}{\ge} H(Y^{N_i}|\bold{X}^{N_i},\bold{S},F(.))  \\
&= H(Y^{N_i}|(F(\bold{X}_{n ,{\bold{S}}}))_{n \in [N_i]})   \\
& \overset{(b)}{=}\sum_{n=1}^{N_i} H(Y_n|(F(\bold{X}_{n ,{\bold{S}}}))_{n \in [N_i]},Y^{n-1})  \\
& \overset{(c)}{=}N_i H(Y_1|F(\bold{X}_{1,\bold{S}}))   \\
& = N_i h(\alpha) \label{1},
\end{align}
where (a) follows by the fact that conditioning reduces the entropy, (b) follows by the telescopic property of the joint entropy and (c) holds because of the memoryless property of  the additive noise in the model.  Combining (\ref{2}) and (\ref{1}) results 
\begin{align}
{I}(\bold{S};\hat{\bold{S}})\le N_i h(\beta)-N_i h(\alpha)\label{91},
\end{align}
which completes the proof.

\subsection{Proof of (vii)}

 To prove the claim, it suffices to only show that $\frac{ \log(G_i+1)}{L_i \log(G_i/L_i)}$  vanishes asymptotically, since   $\log(G_i+1) \ge \log (L_i \epsilon+1)$.  We write
\begin{align}
\frac{\log(G_i+1)}{L_i \log(G_i/L_i)}& = \frac{\log(G_i+1)}{L_i \log(G_i) - L_i \log(L_i)}\\
&= \frac{1}{L_i (\frac{\log(G_i) }{\log(G_i+1)}-  \frac{\log(L_i)}{\log(G_i+1)})} \label{105}.
\end{align}
 Note that
\begin{align}
&L_i (\frac{\log(G_i) }{\log(G_i+1)}-   \frac{\log(L_i)}{\log(G_i+1)}) \\ &\ge  L_i (\frac{\log(G_i) }{\log(G_i+1)}-  \frac{\log(L_i)}{\log(2L_i+1)})   \\
&  =  L_i (\frac{\log(G_i) }{\log(G_i+1)} -  \frac{\log(2L_i+1)  + \log(\frac{L_i}{L_i + 1/2}) - 1}{\log(2L_i+1)})     \\
&= - L_i \frac{\log((G_i+1)/G_i)}{\log(G_i+1)}  + \Theta(\frac{L_i}{\log(L_i)})   \label{104}.
\end{align}
Note that
\footnote{
Note that for any constant $c \in (0,1)$, we have $c/G_i  \le \ln(1+1/G_i)  \le  1 /G_i$ for any  large enough $G_i$.
}  $L_i \frac{\log((G_i+1)/G_i)}{\log(G_i+1)} = O ( \frac{L_i}{G_i  \log(G_i+1)})$ and it vanishes asymptotically. Hence,  at the limit of $i \rightarrow \infty$, the R.H.S of (\ref{104}) is $\Theta(L_i/\log(L_i))$ and so we have $L_i (\frac{\log(G_i) }{\log(G_i+1)}- \frac{\log(L_i)}{\log(G_i+1)}) \rightarrow  \infty$ as $i \rightarrow \infty$. Therefore, using (\ref{105}), we conclude that $\frac{\log(G_i+1)}{L_i \log(G_i/L_i)}  \rightarrow 0$  as $i \rightarrow \infty$ and so the claim is proved.

\section{
Proof of Lemma \ref{lemma4}
}\label{appendD}

Let us define $f_y(x):= h(x)h(y) - h(2xy)$ for any $y \in [0,1/2]$.
Note that $f_y(0) = f_y(1/2)=0$. We aim to prove that $f_y(.)$ is a non-negative function on the interval $[0,1/2]$. To complete the proof, it suffices to show that for each $y$, the function $f_y(.)$ is concave on the interval $[0,1/2]$.

Note that 
\begin{align}
\frac{d}{dx} f_y(x) = \frac{h(y)}{\ln(2)} \ln(\frac{1-x}{x}) - \frac{2y}{\ln(2)} \ln(\frac{1-2xy}{2xy}). 
\end{align}
Now we write
\begin{align}
\frac{d^2}{dx^2} f_y(x)& = \frac{h(y)}{\ln(2)} \frac{-1}{x(1-x)} + \frac{4y^2}{\ln(2)}  \frac{1}{2xy(1-2xy)}  \\
& = \frac{1}{x\ln(2)} \Big (     \frac{-h(y)}{1-x} + \frac{2y}{1-2xy}      \Big)  \\
& \le \frac{1}{x(1-x)\ln(2)} \Big (     -h(y) + 2y \Big)  \\
& \le 0, 
\end{align}
which completes the proof, since   $h(y) \ge 2y$ for any $y \in [0,1/2]$.

\section{
Proof of Lemma \ref{appendlemma1}
}\label{appendlemma}

  Assume that the event in the L.H.S. of 
  (\ref{intercup}) happens. We aim to prove that for any $\bold{t} \in \mathcal{S}_{L,G}$, such that $\dist(\bold{s},\bold{t}) \ge L\epsilon$, and any $g(.) \in \mathcal{F}_{L,m}$, the event $\mathcal{E}^{\mu}_{ \bold{t} , g}$ occurs. 
We write
  \begin{align}
 &\mathcal{E}^{\mu}_{\bold{t},g} = \Big \{
J_{\bold{t},g}~\overset{{\mu}}{\bot}~ Y
 \Big \} \\
 & \overset{(a)}{\supseteq}   \Big \{    \|   P_{Y,J_{\bold{t},g}} - P_Y P_{J_{\bold{t},g}} \|_1   
 \\& \quad \quad \le  ({\mu - 1})\times {\min\limits_{y} P_Y(y) \times \min\limits_{u} P_{J_{\bold{t},g}}(u)}
  \Big \} \\
 & = \Big \{    \|   P_{Y,J_{\bold{t},g}} - P_Y P_{J_{\bold{t},g}} \|_1   
  \le 2 \times \kappa
  \Big \} \\
   & \overset{(b)}{\supseteq}   \Big \{    \|   P_{\bold{X}_{\ints (\bold{s}, \bold{t})},J_{\bold{s},F}} - P_{\bold{X}_{\ints (\bold{s}, \bold{t})}} P_{J_{\bold{s},F}} \|_1   
  \le 2 \times  \kappa
  \Big \} \\
     & \overset{(c)}{\supseteq}   \Big \{   
     \max_{\bold{x}' \in   \mathcal{X}^{\len(\ints (\bold{s}, \bold{t}))}} \Big |   P_{J_{\bold{s},F}| \bold{X}_{\ints (\bold{s}, \bold{t})} 
     = \bold{x}'
     }(1) -  P_{J_{\bold{s},F}(1)}
     \Big |
  \le \kappa
  \Big \} \\
  &= \bigcap_{\bold{x}' \in   \mathcal{X}^{\len(\ints (\bold{s}, \bold{t}))}}  \Big \{   
       \Big |   P_{J_{\bold{s},F}| \bold{X}_{\ints (\bold{s}, \bold{t})} 
     = \bold{x}'
     }(1) -  P_{J_{\bold{s},F}(1)}
     \Big |
  \le \kappa
  \Big \} \\
&= \bigcap_{\bold{x}' \in   \mathcal{X}^{\len(\ints (\bold{s}, \bold{t}))}}  
	\mathcal{E}^{\kappa}_{\ints (\bold{s}, \bold{t}),\bold{x}'}, \\ 
  \end{align}
where (a) follows from Lemma \ref{lemma6}, (b) follows from Corollary \ref{corollary1} and the fact that we have the following Markov chain
\begin{align}
J_{\bold{t},g} - \bold{X}_{\bold{t}} - \bold{X}_{\ints (\bold{s}, \bold{t})} - \bold{X}_{\bold{s}} - J_{\bold{s},F} - Y,
\end{align}
and also, (c) follows from Lemma \ref{lemma7}.
The proof is thus complete. 
\end{IEEEproof}

\section{
Proof of Lemma \ref{lemma13}
}\label{appendlemma2}
 
Let $\bold{X} \in \mathcal{X}^G$ be a random sequence which is distributed uniformly over $\mathcal{X}^G$.
Note that
\begin{align}
 P_{J_{\bold{s},F}|\bold{X}_{\ints (\bold{s}, \bold{t})} 
 = \bold{x}'} &(1)\\& =
 \mathbb{E}_{\bold{X} 
 } \Big [F(X_{\bold{s}}) \Big | \bold{X}_{\ints (\bold{s}, \bold{t})} = \bold{x}' , F\Big ]
  \\
 & \overset{(a)}{=} \frac{1}{|\mathcal{X}|^{(L - \len(\ints (\bold{s}, \bold{t})))}} \sum_{
 \substack{
              \bold{x} \in \mathcal{X}^G\\
               \bold{x}_{\ints (\bold{s}, \bold{t})} 
 = \bold{x}'}
 }  F(\bold{x}_{\bold{s}}),
\end{align}
where (a) follows from the fact that 
\begin{align}
\Big  | \Big\{  \bold{x} \in \mathcal{X}^G :  \bold{x}_{\ints (\bold{s}, \bold{t})} = \bold{x}'  \Big \} \Big | = |\mathcal{X}|^{(L - \len(\ints (\bold{s}, \bold{t})))}.
\end{align}
Now we write
\begin{align}
\pbb( &\overline{\mathcal{E}^{\kappa}}_{\ints (\bold{s}, \bold{t}),\bold{x}'})  \\&=  \pbb \Big ( \Big | P_{J_{\bold{s},F}|\bold{X}_{\ints (\bold{s}, \bold{t})} = \bold{x}'} (1) -    P_{J_{\bold{s},F}}(1)\Big | \ge \kappa  \Big ) \\
  & =  \pbb \Big ( \Big |
 P_{J_{\bold{s},F}|\bold{X}_{\ints (\bold{s}, \bold{t})} 
 = \bold{x}'} (1)
  -     \gamma \Big | \ge  \kappa  \Big ) \\
  &  =  \pbb \Big ( \Big |
  \frac{1}{|\mathcal{X}|^{(L - \len(\ints (\bold{s}, \bold{t})))}} \sum_{
 \substack{
              \bold{x} \in \mathcal{X}^G\\
               \bold{x}_{\ints (\bold{s}, \bold{t})} 
= \bold{x}'}
 }  F(\bold{x}_{\bold{s}})
  -     \gamma \Big | \ge  \kappa  \Big )\label{eq3},
\end{align}
%First define a function
%$\psi: 
%%\Big \{ f: \mathcal{X}^L \to \{0 , 1\}  \Big \} 
%\{ 0,1\}^{|\mathcal{X}|^L}
%\to \{0,1\}$ such that 
%\begin{align}
%\psi(\{f(\bold{x})\}_{\bold{x} \in |\mathcal{X}|^L})  = \mathbbm{1} \Big \{\Big |
%  \frac{1}{|\mathcal{X}|^{(L - |\ints (\bold{s}, \bold{t})|)}} \sum_{
% \substack{
%              \bold{x}' \in \mathcal{X}^G\\
%               \bold{x}'_{\ints (\bold{s}, \bold{t})} 
%= \bold{x}_{\ints (\bold{s}, \bold{t})}}
% }  f(\bold{x}'_{\bold{s}})
%  -     \gamma \Big | \ge  \kappa \Big \}
%\end{align}
% for any $f(.) \in \Big \{ f: \mathcal{X}^L \to \{0 , 1\}  \Big \}$.  Consider a random  function $\tilde{F} \in \Big \{ f: \mathcal{X}^L \to \{0 , 1\}  \Big \}$, such that 
% $\{ \tilde{F}(\bold{x})\}_{\bold{x} \in |\mathcal{X}|^L}$ are $|\mathcal{X}|^L$ i.i.d. random variables with marginal distributions $\text{Ber}(\gamma)$.
% 
% 
% Let $n = |\mathcal{X}|^L$ and $W = \sum_{\bold{x} \in \mathcal{X}^L} \tilde{F}(\bold{x})$. 
Now let $\phi :   \mathcal{X}^L  \to [|\mathcal{X}|^L]$ be a bijection mapping and  $n = |\mathcal{X}|^L$. 
 Define $n$ random variables $V_i= F({\phi^{-1}(i)})$, $i \in [n]$.  
Let 
 \begin{align}
\mathcal{T} := \Big  \{ \phi({\bold{x}}) ~\Big |~ {\bold{x}} \in \mathcal{X}^G , {\bold{x}}_{\ints (\bold{s}, \bold{t})}
= \bold{x}'  \Big  \}.
\end{align}
Note that $|\mathcal{T}| = |\mathcal{X}|^{(L - \len(\ints (\bold{s}, \bold{t})))}$. Define  $V = \sum_{i \in \mathcal{T}} V_i$.
Note that $\mathbb{E} [V_i] =     \gamma$ for any $i$.

Note  that the random variables $V_i$, $i \in [n]$, and the set $\mathcal{T} \subseteq[n]$, satisfy the required conditions of Lemma \ref{lemma10}. 
Therefore,  using Lemma \ref{lemma10}, we conclude that 
\begin{align}
\pbb(& \overline{\mathcal{E}^{\kappa}}_{\ints (\bold{s}, \bold{t}),\bold{x}'})  
\\&  =
   \pbb \Big ( \Big |
  \frac{1}{|\mathcal{X}|^{(L - \len(\ints (\bold{s}, \bold{t})))}} \sum_{
 \substack{
              \bold{x} \in \mathcal{X}^G\\
               \bold{x}_{\ints (\bold{s}, \bold{t})} 
= \bold{x}'}
 }  F(\bold{x}_{\bold{s}})
  -     \gamma \Big | \ge  \kappa  \Big ) \\
& =\pbb \Big ( \Big |\frac{V-\mathbb{E} [V]}{|\mathcal{T}|} \Big | \ge \kappa \Big ) \\
&\le 2  (n+1)  \exp \Big ( -2|\mathcal{T}|\kappa^2 \Big )\\
&= 2  (|\mathcal{X}|^L+1)  \exp \Big ( -2
|\mathcal{X}|^{(L - \len(\ints (\bold{s}, \bold{t})))} \kappa^2 \Big ) \\
& \overset{(a)}{\le}2   (|\mathcal{X}|^L+1)  \exp \Big ( -2
|\mathcal{X}|^{\frac{1}{2} L \epsilon} \kappa^2 \Big )\label{eq2},
\end{align}
where (a) follows from Lemma \ref{lemmatau}.
The proof is thus complete.

%\bibliographystyle{IEEEtran}
%\bibliography{refgwas}

\begin{IEEEbiographynophoto}{Behrooz Tahmasebi}
 is a Ph.D. student in Electrical Engineering and Computer Science (EECS) at MIT.
He  received his B.Sc. and M.Sc. degrees from the Department of Electrical Engineering, Sharif University of Technology, Tehran, Iran, in 2016 and 2018, respectively. His research interests include information theory, probability theory, and machine learning. 
\end{IEEEbiographynophoto}

\begin{IEEEbiographynophoto}{Mohammad Ali Maddah-Ali}  (Member, IEEE) received the B.Sc. degree in electrical engineering from the Isfahan University of Technology, the M.A.Sc. degree in electrical engineering from the University of Tehran, and the PhD degree from the Department of Electrical and Computer Engineering, University of Waterloo, Canada in 2007. From 2007 to 2008, he was with the Wireless Technology Laboratories, Nortel Networks, Ottawa, ON, Canada. From 2008 to 2010, he was a Post-Doctoral Fellow with the Department of Electrical Engineering and Computer Sciences, University of California at Berkeley. Then, he joined Nokia Bell Labs, Holmdel, NJ, USA, as a Communication Research Scientist. Recently, he started working at the Sharif University of Technology as a Faculty Member. He is a recipient of NSERC Postdoctoral Fellowship in 2007, the Best Paper Award from the IEEE International Conference on Communications (ICC) in 2014, the IEEE Communications Society and IEEE Information Theory Society Joint Paper Award in 2015, and the IEEE Information Theory Society Joint Paper Award in 2016.
\end{IEEEbiographynophoto}

\begin{IEEEbiographynophoto}{Seyed Abolfazl Motahari}
is an assistant professor at Computer Engineering Department of Sharif University of Technology. He received his B.Sc. degree from the Iran University of Science and Technology (IUST), Tehran, in 1999, the M.Sc. degree from Sharif University of Technology, Tehran, in 2001, and the Ph.D. degree from University of Waterloo, Waterloo, Canada, in 2009, all in electrical engineering. From August 2000 to August 2001, he was a Research Scientist with the Advanced Communication Science Research Laboratory, Iran Telecommunication Research Center (ITRC), Tehran. From October 2009 to September 2010, he was a Postdoctoral Fellow with the University of Waterloo, Waterloo. From 2010 to 2013, he was a Postdoctoral Fellow with the Department of Electrical Engineering and Computer Sciences, University of California at Berkeley. His research interests include multiuser information theory and Bioinformatics. He received several awards including Natural Science and Engineering Research Council of Canada (NSERC) Post-Doctoral Fellowship.
\end{IEEEbiographynophoto}

\end{document}